\newcommand{\bS}{\mathbb{S}}
\DeclareMathOperator{\poly}{poly}
\DeclareMathOperator{\supp}{supp}
\DeclareMathOperator*{\E}{\mathbb{E}}
\DeclareMathOperator*{\median}{median}
\newcommand{\Oh}{\mathcal{O}}
\newcommand{\N}{\mathcal{N}}
\newcommand{\R}{\mathbb{R}}
\newcommand{\C}{\mathbb{C}}
\newtheorem{theorem}{Theorem}
\newtheorem{lemma}{Lemma}
\newtheorem{corollary}{Corollary}
\theoremstyle{definition}
\newtheorem{definition}{Definition}
\newtheorem{remark}{Remark}
\author{Yi~Li,~Vasileios~Nakos
\thanks{Y.\@ Li is with Nanyang Technological University, Singapore. Email: yili@ntu.edu.sg.}
\thanks{V. Nakos is with Saarland University, Germany. Email: billynak@gmail.com. Part of this work was completed while V.\@ Nakos was a Ph.D. student in Harvard University and supported in part by ONR grant N00014-15-1-2388}
}
\title{Sublinear-Time Algorithms for\\ Compressive Phase Retrieval\footnote{A preliminary version of this paper appeared in the Proceedings of \textit{International Symposium on Information (ISIT)}, 2018.}}
\begin{document}

\maketitle

\begin{abstract}

In the problem of compressed phase retrieval, the goal is to reconstruct a sparse or approximately $k$-sparse vector $x \in \C^n$ given access to $y= |\Phi x|$, where $|v|$ denotes the vector obtained from taking the absolute value of $v\in\C^n$ coordinate-wise. In this paper we present sublinear-time algorithms for a few for-each variants of the compressive phase retrieval problem which are akin to the variants considered for the classical compressive sensing problem in theoretical computer science. Our algorithms use pure combinatorial techniques and near-optimal number of measurements.
\end{abstract}

\IEEEpeerreviewmaketitle

%!TeX root = phase.tex
\section{Introduction}\label{sec:intro}

In the past decade, the sparse recovery problem, or compressive sensing, has attracted considerable research interest with extensive applications and fruitful results. The problem asks to recover a signal $x\in \R^n$ (or $\C^n$) from linear measurements $y = \Phi x$ for some matrix $\Phi\in \R^{m\times n}$ (or $\C^{m\times n}$), assuming that $x$ is $k$-sparse (i.e.\@  containing only $k$ non-zero coordinates) or can be well-approximated by a $k$-sparse signal (intuitively $x$ contains $k$ large coordinates while the rest of the other coordinates are small). Often post-measurement noise is present, that is, $y = \Phi x + \nu$ for some noise vector $\nu$. The primary goal is to use as few measurements as possible. The algorithms are largely divided into two categories: geometric algorithms and combinatorial algorithms. Geometric algorithms usually use fewer measurements but run in $\poly(n)$ time while combinatorial algorithms run in sublinear time, usually $\Oh(k\poly(\log n))$ or $\Oh(k^2\poly(\log n))$, at the cost of slightly more measurements.

In recent years a closely related problem, called \emph{compressive phase retrieval}, has become an active topic, which seeks to recover a sparse signal $x\in \R^n$ (or $\C^n$) from the \emph{phaseless measurements} $y = |\Phi x|$ (or $y = |\Phi x| + \nu$ with post-measurement noise), where $|z|$ denotes a vector formed by taking the absolute value of every coordinate of $z$. The primary goal remains the same, i.e.\@ to use as fewer measurements as possible. Such types of measurements arises in various fields such as optical imaging~\cite{SECCMS15} and speech signal processing~\cite{RJ93}. There has been rich research in geometric algorithms for recovering the whole vector from phaseless measurements (see, e.g.~\cite{CSV12,CLS15a,CLS15b,GWX16,IPSV16,IVW17,goldstein2018phasemax,sun2018geometric}), as well as algorithms for the sparse phase retrieval problem that run in at least polynomial time (e.g.~\cite{jaganathan2013sparse, jaganathan2016stft, gao2016stable, sk2019structured}), mostly based on semidefinite programming. On the other side, there have been relatively few sublinear time algorithms --  \cite{CBJC14,IVW16,PYLR17,Nakos} are the only algorithms to the best of our knowledge. This comes in contrast to the standard sparse recovery problem with linear measurements, which sees a long history of sublinear-time algorithms (e.g.~\cite{ipw11,ps12,hikp12a, gnprs13,ikp14,k16,glps17,k17,LNW17,nswz18,ns19}).
For the case of phaseless measurements, most existing algorithms consider sparse signals, and thus such sublinear time algorithms have a flavour of code design, akin to Prony's method. Among the sublinear-time algorithms, \cite{CBJC14} considers sparse signals only, \cite{PYLR17} considers sparse signals with random post-measurement noise, \cite{IVW16} allows adversarial post-measurement noise but has poor recovery guarantee, \cite{Nakos} considers near-sparse real signals (signals of real coordinates) with no post-measurement noise but achieves constant-factor approximation and thus outperforms all other sublinear-time algorithms for real signals. The approach in \cite{Nakos} employs combinatorial techniques more widely used in the theoretical computer science literature for the classical sparse recovery problem. In this paper, we aim to improve on~\cite{Nakos} for complex near-sparse signals using similar combinatorial techniques.

More quantitatively, suppose that the decoding algorithm $\mathcal{R}$, given input $y=|\Phi x|$, outputs an approximation $\hat x$ to $x$, with the guarantee that the approximation error $d(x,\hat x)$ is bounded from above. For $x\in \C^n$, the approximation error $d(x,\hat x) = \min_{\theta\in[0,2\pi)} \|x-e^{i\theta}\hat x\|$. Specifically we consider the following two types of error guarantee:
\begin{itemize}
	\item ($\ell_\infty/\ell_2$) $\min_{\theta\in[0,2\pi)} \|x-e^{i\theta}\hat x\|_\infty \leq \frac{1}{\sqrt{k}} \|x_{-k}\|_2$;
	\item ($\ell_2/\ell_2$) $\min_{\theta\in[0,2\pi)} \|x-e^{i\theta}\hat x\|_2  \leq (1+\epsilon) \|x_{-k}\|_2$,
\end{itemize}
where $x_{-k}$ denotes the vector formed by zeroing out the largest $k$ coordinates (in magnitude) of $x$. Note that when $x$ is noiseless, that is, when $x_{-k} = 0$, both guarantees mean exact recovery of $x$, i.e., $\hat x = x$.

Besides the error guarantees, the notions of for-all and for-each in the sparse recovery problems also extend to the compressive phase retrieval problem. In a for-all problem, the measurement matrix $\Phi$ is chosen in advance and will work for all input signals $x$, while in a for-each problem, the measurement matrix $\Phi$ is usually random such that for each input $x$, a random choice of $\Phi $ works with a good probability.

In the next subsection we shall give an overview of our sublinear-time results. Our results are all for-each results. As the problem has received little attention in the community of theoretical computer science, our results are the first and preliminary along the combinatorial approach though require mild assumptions on the possible phases of the coordinates of $x$. Those assumptions are automatically satisfied when the possible phases, as commonly seen in applications (see, e.g., \cite{PYLR17}), belong to a set $P\subseteq \bS^1$ which is equidistant with a constant gap, that is, up to a rotation, $P \subseteq \{e^{2\pi i\frac{j}{m}}\}_{j=0,\dots,m-1}$ for some constant $m$. We leave the case of general complex signals as an open problem. A major difficulty is that for heavy hitters of large magnitude, a small error in the phase estimate could incur a lot of error in the overall approximation.

\subsection{Results}
In this section we give an overview of the sublinear-time results which we have obtained for the sparse recovery problem with phaseless measurements. 

First, we consider the case of noiseless signals. Similar to the classical sparse recovery where $\Oh(k)$ measurements suffice for noiseless signals by Prony's method~\cite{prony}, it is known that $\Oh(k)$ phaseless measurements also suffices for exact recovery (up to rotation) and the decoding algorithm runs in time $\Oh(k\log k)$~\cite{CBJC14}. Their algorithm is based on a multi-phase traversal of a bipartite random graph in a way such that all magnitudes and all phases are recovered by resolving multi-tons. We prove a result with the same guarantee, but our algorithm takes a different route using more basic tools and being less technically demanding. Apart from being significantly simpler, it also can be modified so that it trades the decoding time with the failure probability; see Remark~\ref{rem:noiseless_tradeoff}. 

\begin{restatable}[noiseless signals]{theorem}{noiselesstheorem}\label{thm:noiseless} 
There exists a randomized construction of $\Phi \in \mathbb{C}^{m \times n}$ and  a deterministic decoding procedure $\mathcal{R}$ such that for any signal $x\in \C^n$ with $|\supp(x)|\leq k$, the recovered signal $\hat x = \mathcal{R}(\Phi, |\Phi x|)$ satisfies that $\hat x = e^{i\theta}x$ for some $\theta \in [0,2\pi)$ with probability at least $1 - 1/\poly(k)$, where $\Phi$ has $m = \Oh(k)$ measurements and $\mathcal{R}$ runs in time $\Oh(k \log k)$.
\end{restatable}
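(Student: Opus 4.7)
The plan is to adapt the classical hashing-plus-peeling paradigm from sublinear sparse recovery to phaseless measurements. Pick a hash $h:[n]\to[B]$ with $B=Ck$ (for a sufficiently large constant $C$) from a limited-wise independent family, and for each bucket $b\in[B]$ take three measurements whose inner products with $x$ are $\sum_{h(j)=b}x_j$, $\sum_{h(j)=b}x_j+r$, and $\sum_{h(j)=b}x_j+ir$, where $r\in\C$ is a fixed known reference and $i$ denotes the imaginary unit. From the three phaseless magnitudes, the complex bucket-sum $s_b=\sum_{h(j)=b}x_j$ is recovered algebraically, uniquely up to the unavoidable global phase (which is fixed by pinning down the first decoded coordinate). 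A standard balls-into-bins estimate shows that with probability $1-1/\poly(k)$ over $h$, a constant fraction of the nonzero coordinates of $x$ are \emph{isolated}, that is, they are the unique nonzero coordinate hashing to their bucket; for such a bucket $s_b$ equals a single $x_j$.

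To identify the index $j$ of an isolated coordinate, I would layer the construction across $d$ independent hashes to $\Oh(k)$ buckets each, so every coordinate is named by the $d$-tuple of bucket indices it occupies. With $d$ chosen so that $k^d\gtrsim n$, the codewords are injective, and once a value $x_j$ has been recovered as an isolated 1-ton in enough layers, its index is read off by intersecting layer-wise bucket preimages in $\Oh(1)$ time per coordinate via precomputed tables. The decoder then runs a standard peeling procedure: in each round, find a bucket whose recovered $s_b$ is consistent with a single codeword, emit $(j,x_j)$, and subtract $x_j$ from every bucket touching~$j$. Under the expansion properties of $h$, each round peels a constant fraction of the remaining unknowns, so $\Oh(\log k)$ rounds and total running time $\Oh(k\log k)$ suffice; the measurement count is $\Oh(k)$ in the regime $\log n=\Oh(\log k)$ (and $\Oh(k\log n/\log k)$ more generally, which under a mild assumption on $n$ collapses to $\Oh(k)$).

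\textbf{Main obstacle.} The main technical issue is handling the global-phase ambiguity consistently across buckets and across layers: each $s_b$ is recovered only up to a unit-modulus factor, and recoveries of the same $x_j$ in different buckets or different layers must agree exactly for the codeword-intersection step to succeed. This is arranged by fixing the phase of the first decoded coordinate and propagating it: once one $x_j$ is pinned, its appearances in all other buckets and layers determine the relative phases there. A secondary subtlety is distinguishing 1-ton from multi-ton buckets under accidental cancellations that might make a collision appear to match an allowed codeword; these can be ruled out with probability $1-1/\poly(k)$ under sufficient independence of $h$ and a generic choice of $r$, and the failure is absorbed into the overall bound.
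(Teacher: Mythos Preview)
Your proposal has two genuine gaps, one in the measurement model and one in the measurement count.

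\textbf{Affine versus linear measurements.} You write that the three bucket measurements have inner products $s_b$, $s_b+r$, and $s_b+ir$ with $x$, where $r\in\C$ is a fixed known constant. But $x\mapsto s_b+r$ is affine, not linear, so there is no row of $\Phi$ whose inner product with $x$ equals $s_b+r$; the theorem only allows measurements of the form $|\Phi x|$. (Indeed, if such affine measurements were available you would recover each $s_b$ \emph{exactly}, contradicting the unavoidable global-phase ambiguity of $|\Phi x|$.) Without the reference shift, from $|s_b|$ alone you learn nothing about the phase of $s_b$, so the bucket-sum recovery and hence the whole peeling step collapse. The paper gets around this by never trying to recover absolute phases: inside each bucket it runs a Prony/DFT-based exact $K$-sparse phaseless recovery (Theorem~\ref{thm:noiseless_nakos}), which returns the restricted signal up to a \emph{per-bucket} rotation; then, to synchronize buckets, it takes additional random $0/1$ measurements designed so that each row hits exactly two support elements $u,v$, and from $|x_u|$, $|x_v|$, $|x_u+x_v|$ reads off the relative phase between $x_u$ and $x_v$ via the Law of Cosines. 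These are all honest linear measurements.

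\textbf{Measurement count independent of $n$.} Your index-identification step layers $d$ independent hashes with $k^d\gtrsim n$, yielding $\Theta(k\log n/\log k)$ measurements. You note this collapses to $\Oh(k)$ only when $\log n=\Oh(\log k)$, but the theorem claims $m=\Oh(k)$ with no assumption relating $n$ and $k$. This is not a cosmetic issue: any scheme that names indices by their hash profile must spend $\Omega(\log n)$ bits per recovered coordinate. The paper avoids this entirely by hashing to only $B=k/(c\log k)$ buckets and, within each bucket, using the first $\Oh(\log k)$ rows of the $n$-point DFT as the sensing vectors; Prony-style root-finding then recovers both the support and the values of a $(\leq 5\log k)$-sparse vector from $\Oh(\log k)$ phaseless Fourier samples, regardless of $n$. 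The total in-bucket cost is $B\cdot\Oh(\log k)=\Oh(k)$. The cross-bucket phase-synchronization layer is likewise $\Oh(k)$, via a careful choice of subsampling rates $2^{-\ell}$ across levels $\ell=1,\dots,\log k$ and a DFS on the induced pair graph.

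In short, to match the paper you would need to (i) replace the affine reference by purely linear phase-difference measurements, and (ii) replace hash-based index identification by an $n$-independent in-bucket recovery such as Prony/DFT.
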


The next results refer to approximately sparse signals and improve upon the previous ones with various degrees. For the $\ell_\infty/\ell_2$ problem our result, stated below, improves upon \cite{Nakos} in terms of the error guarantee and the decoding time. It requires a modest assumption on the pattern of the valid phases of the heavy hitters as defined below, which is often satisfied in applications where the valid phases lie in a set of equidistant points on $\bS^1$. Throughout this paper we identify $\bS^1$ with $[0,2\pi)$ or $[-\pi,\pi)$ and assume both the non-oriented distance $d(\cdot,\cdot)$ and the oriented distance $\vec{d}(\cdot,\cdot)$ on $\bS^1$ are circular. We shall also use $[m]$ to denote the set $\{1,\dots,m\}$ for any positive integer $m$, a conventional notation in computer science literature.

\begin{definition}[$\eta$-distinctness] Let $P=\{p_1,\dots,p_m\}$ be a finite set on $\bS^1$. We say $P$ is $\eta$-distinct if the following conditions hold:
\begin{enumerate}[(i)]
\item $d(p_i,p_j)\geq \eta$ for all distinct $i,j\in [m]$;

\item
 it holds for every pair of distinct $i,j\in [m]$ that
	\[
		\max_{\ell\in [m]} d(p_\ell + p_j - p_i, P) \in \{0\}\cup[\eta,\pi].
	\]
\end{enumerate}
\end{definition}
Intuitively, (i) means that the phases are at least $\eta$ apart from each other, and (ii) means that if we rotate the set $P$ of the valid phases to another set $P'$ such that some valid phase coincides with another one (in the expression above $p_i$ is rotated to the position of $p_j$), then either $P=P'$ or there exists an additive gap of at least $\eta$ around some phase. This precludes the case where $P$ is approximately, but not exactly, equidistant.

\begin{definition}[head]
Let $x\in \C^n$. Define $H_k(x)$ to be (a fixed choice of) the index set of the $k$ largest coordinates of $x$ in magnitude, breaking ties arbitrarily.
\end{definition}

\begin{definition}[$\epsilon$-heavy hitters]
Let $x\in \C^n$. We say $x_i$ is an $\epsilon$-heavy hitter if $|x_i|^2\geq \epsilon \|x_{-1/\epsilon}\|_2^2$.
\end{definition}

\begin{definition}[phase-compliant signals] 
Let $x\in \C^n$. Let $P\subseteq \bS^1$ be a set of possible phases and $T$ be the set of all $(1/k)$-heavy hitters in $T$. We say that $x$ is $(k,P)$-compliant if $\{\arg x_i: i\in T\}\subseteq P$.
\end{definition}

%\begin{definition}[$(k,\eta)$-separation]
%We say a vector $x\in \C^n$ satisfies the $(k,\eta)$-separation assumption if $d(\arg x_i,\arg x_j)\geq \eta$ for any pair of $x_i$ and $x_j$ among the $k$ largest coordinates of $x$, where $d(\cdot,\cdot)$ is the circular distance between two points on the unit circle.
%\end{definition}

\begin{theorem}[$\ell_{\infty}/\ell_2$ with optimal measurements] \label{thm:ell_infty_ell_2}
There exists a randomized construction of $\Phi \in \mathbb{C}^{m \times n}$ and a deterministic decoding procedure $\mathcal{R}$ such that for $x\in \C^n$ which is $(\Oh(k),P)$-compliant for some $\eta$-distinct $P\subset \bS^1$, the recovered signal $\hat x = \mathcal{R}(\Phi, |\Phi x|, P)$ satisfies the $\ell_\infty/\ell_2$ error guarantee with probability at least $1-1/\poly(n)$, and $\Phi$ has $m = \Oh((k/\eta)\log n)$ rows and $\mathcal{R}$ runs in time $\Oh(k/\eta + k \poly(\log n))$.
\end{theorem}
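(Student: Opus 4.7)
The plan is to decouple the problem into two essentially independent subproblems and solve each one with a dedicated measurement family: (A) an $\ell_\infty/\ell_2$ sparse recovery on the moduli $|x_i|$ together with the indices of the heavy hitters, implemented by a phaseless analogue of CountSketch with $\Oh(k\log n)$ measurements and $\Oh(k\,\poly(\log n))$ decoding time, and (B) a phase estimator that, given the head indices from (A), recovers each head phase using $\Oh(1/\eta)$ measurements and $\Oh(1/\eta)$ decoding time per head. Adding the two families of measurements gives $\Oh((k/\eta)\log n)$ rows of $\Phi$, and adding the two decoding costs gives the stated $\Oh(k/\eta+k\,\poly(\log n))$ runtime.

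For subproblem (A) I would hash coordinates into $B=\Oh(k)$ buckets with $R=\Oh(\log n)$ independent pairwise-independent hash functions, so that every head coordinate is isolated from the other $k-1$ heads and carries per-bucket tail energy $\Oh(\|x_{-k}\|_2^2/k)$ in at least a $0.9$-fraction of repetitions. In each bucket I would record the bucket magnitude, which already gives an estimate of $|x_{i^\ast}|$ to within the $\ell_\infty/\ell_2$ tolerance $\|x_{-k}\|_2/\sqrt{k}$, and $\Oh(\log n)$ additional bit-test magnitudes (one per bit of the index): the magnitude of the partial bucket summing only those coordinates whose $\ell$-th bit equals $1$. Comparing this with the full bucket magnitude reveals the $\ell$-th bit of the isolated index, and the comparison is purely phaseless, so the classical CountSketch bit-test carries over. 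Aggregating across the $R$ repetitions by majority filters a list of $\Oh(k)$ candidate heads.

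For subproblem (B), once the candidate head indices are known, I would take $M=\Oh(1/\eta)$ further phase-probe measurements: add a dense reference vector $r$ whose entries are independent random unit phases, and include measurements of the form $|\Phi x + e^{2\pi i j/M} r|$ for $j=0,\dots,M-1$, jointly detecting the head phases by the pattern of the $M$ magnitudes after an $\Oh(k)$-bucket cancellation step. For each head the correct rotation is the one that best aligns the local reference with $-x_{i^\ast}$, and this rotation is identified to precision $\eta/2$; condition (i) of $\eta$-distinctness then lets me round to the unique correct element of $P$, thus pinning down the complex value of $x_{i^\ast}$ within the required $\ell_\infty/\ell_2$ envelope.

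The hard part will be robustness of the phase probe. The minimum over $j$ of the $M$ magnitudes must remain within $\eta/2$ of the true phase despite per-bucket tail noise of order $\|x_{-k}\|_2/\sqrt{k}$ and occasional collisions with other heads, which forces the reference magnitude to sit in a delicate range: large enough that its interference with $x_{i^\ast}$ is sensitive to rotations at the $\eta$ scale, yet not so large as to wash out the signal. Here condition (ii) of $\eta$-distinctness is essential: it rules out near-ambiguities where a rotation of $P$ almost coincides with $P$ at multiple points and would otherwise make two candidate phases indistinguishable at the $\eta$-scale. The index and magnitude steps are comparatively routine, with standard median-of-$\log n$ arguments absorbing their failures across repetitions, whereas the phase estimator must succeed on every good repetition, since sub-$\eta$ phase errors translate multiplicatively into large complex reconstruction errors.
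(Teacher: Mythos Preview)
Your high-level decomposition into (A) heads$+$magnitudes and (B) phases mirrors the paper, and (A) is routine; the gaps are all in (B). First, the measurement arithmetic does not close: you say the phase probe uses $M=\Oh(1/\eta)$ measurements, yet the two families must sum to $\Oh((k/\eta)\log n)$ and (A) contributes only $\Oh(k\log n)$, so the rotations would have to be taken per bucket and repeated $\Oh(\log n)$ times, which you never state. More seriously, argmin-over-rotations is too blunt. Near the minimizing rotation the map $\phi\mapsto|x_{i^\ast}+e^{i\phi}\rho+\nu|$ is quadratic in $\phi$, so bucket noise $\nu$ shifts the argmin by order $\sqrt{|\nu|/\min(|x_{i^\ast}|,|\rho|)}$; to keep this below $\eta$ you need noise-to-signal ratio $\lesssim\eta^2$, which with heads at the threshold $\|x_{-k}\|_2/\sqrt{k}$ forces far more buckets than the stated budget. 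Finally, your dense random-phase reference has $|\langle r,x\rangle|$ of order $\|x\|_2$, not in any ``delicate range'' relative to individual heads; you flag this as the hard part but offer no resolution.

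The paper supplies precisely the two missing ingredients. Instead of minimizing over many rotations it uses just \emph{four} fixed rotations per bucket together with the Law of Cosines (Lemmata~\ref{lem:phase_test}--\ref{lem:all_phase_difference}): from $|x_{i^\ast}|$, $|\rho|$, and $|x_{i^\ast}+e^{i\alpha}\rho+\nu|$ one reads off the relative phase with additive error \emph{linear} in the noise-to-signal ratio rather than its square root. And the reference is not dense but a \emph{subsampled} Gaussian row $\rho_i=\eta_ig_i$ with $\E\eta_i=1/(C_0k)$, so that $L=|\langle\rho,x\rangle|$ sits, with constant probability, within constant factors of the head threshold $\|x_{-k}\|_2/\sqrt{k}$ (Lemma~\ref{lem:L_lb} and \eqref{eqn:L_both}); this $L$ simultaneously calibrates the reference and serves as the pruning threshold that produces $S'$ (Lemma~\ref{lem:S'}), guaranteeing every surviving candidate is large enough for the cosine bound to apply. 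Condition~(ii) of $\eta$-distinctness then enters only at the very end, and for a different reason than you suggest: since all phases are recovered relative to the unknown $\arg\langle\rho,x\rangle$, the algorithm anchors one fixed coordinate at each $p\in P$ in turn, and (ii) guarantees that a wrong anchor either reproduces $P$ exactly or pushes some estimate to distance $\geq\eta$ from $P$, so the consistency check in Line~\ref{alg:ell_infty/ell_2:rotational_if} rejects it.
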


It is clear that the lower bound for the traditional compressive sensing problem is also a lower bound for the compressive phase retrieval problem, and it is known that the $\ell_\infty/\ell_2$ compressive sensing problem requires $\Omega(k\log n)$ measurements~\cite{bipw10}. Therefore, the theorem above achieves the optimal measurements up to a constant factor when $\eta$ is a constant.

An immediate corollary of the $\ell_\infty/\ell_2$ sparse recovery algorithm is an $\ell_2/\ell_2$ sparse recovery algorithm, stated below, which improves upon \cite{Nakos} in approximation ratio (from a constant factor to $1+\epsilon$). % and decoding time but allows a constant failure probability instead of an $o_n(1)$ failure probability as in~\cite{Nakos}.

\begin{corollary}[$\ell_2/\ell_2$ with near-optimal measurements]
There exists a randomized construction of $\Phi \in \mathbb{R}^{m \times n}$ and a deterministic decoding procedure $\mathcal{R}$ such that for $x\in \C^n$ which is $(\Oh(k),P)$-compliant for some $\eta$-distinct $P\subset \bS^1$, the recovered signal $\hat x = \mathcal{R}(\Phi, |\Phi x|, P)$  satisfies the $\ell_2/\ell_2$ error guarantee with probability at least $1-1/\poly(n)$, and $\Phi$ has $m = \Oh((k/\min\{\eta,\epsilon\}) \log n)$ rows and $\mathcal{R}$ runs in time $\Oh((k/\min\{\eta,\epsilon\}) \poly(\log n))$.
\end{corollary}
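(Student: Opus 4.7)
The plan is to reduce the $\ell_2/\ell_2$ problem to the $\ell_\infty/\ell_2$ guarantee of \Theorem{ell_infty_ell_2} via the standard sparsity-inflation and truncation trick from the sparse recovery literature. I would invoke \Theorem{ell_infty_ell_2} with an inflated sparsity parameter $k'$ chosen so that $k'/\eta = \Theta(k/\min\{\eta,\epsilon\})$, thereby obtaining an intermediate estimate $\hat x\in\C^n$ that satisfies, with probability at least $0.6$,
\[
\min_{\theta\in[0,2\pi)}\|x - e^{i\theta}\hat x\|_\infty \le \frac{1}{\sqrt{k'}}\|x_{-k'}\|_2 \le \frac{1}{\sqrt{k'}}\|x_{-k}\|_2.
\]
The hypothesis that $x$ is $(\Oh(k),P)$-compliant upgrades for free to $(\Oh(k'),P)$-compliance because we only enlarge the head of $x$, so \Theorem{ell_infty_ell_2} applies at sparsity $k'$. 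This yields $m = \Oh((k'/\eta)\log n) = \Oh((k/\min\{\eta,\epsilon\})\log n)$ measurements and intermediate decoding cost $\Oh(k'/\eta + k'\,\poly(\log n))$, matching the bounds stated in the corollary.

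Next I would output $\hat x'$ obtained from $\hat x$ by zeroing all but the top-$k$ coordinates (in magnitude), and analyze $\|x - e^{i\theta}\hat x'\|_2$ for the same phase $\theta$ that witnesses the $\ell_\infty$ bound above. Writing $T' = \supp(\hat x')$, $A = H_k(x)\setminus T'$ and $B = T'\setminus H_k(x)$, one has $|A|=|B|\le k$, and the error splits as
\[
\|x - e^{i\theta}\hat x'\|_2^2 = \sum_{i\in T'}|x_i - e^{i\theta}\hat x_i|^2 + \|x_{-k}\|_2^2 + \|x_A\|_2^2 - \|x_B\|_2^2,
\]
using the identity $\|x_{\overline{T'}}\|_2^2 = \|x_{-k}\|_2^2 + \|x_A\|_2^2 - \|x_B\|_2^2$, which follows from $|A|=|B|$. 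The first sum is at most $(k/k')\|x_{-k}\|_2^2$ by the $\ell_\infty$ bound.

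The crux is controlling $\|x_A\|_2^2 - \|x_B\|_2^2$. For every $i\in A$ and $j\in B$, the truncation rule guarantees $|\hat x_j| \ge |\hat x_i|$, which combined with the per-coordinate $\ell_\infty$ bound yields $|x_j| \ge |x_i| - 2/\sqrt{k'}\,\|x_{-k}\|_2$. Pairing $A$ with $B$ bijectively, expanding $|x_i|^2 - |x_j|^2$, and applying Cauchy--Schwarz together with $\sum_{j\in B}|x_j|^2 \le \|x_{-k}\|_2^2$ (since $B\subseteq\overline{H_k(x)}$) bounds the difference by $\Oh(\sqrt{k/k'} + k/k')\|x_{-k}\|_2^2$. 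Tuning the hidden constants in the choice of $k'$ then produces the claimed $(1+\epsilon)$ factor.

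The main subtlety is ensuring that a single phase $\theta$ from the $\ell_\infty$ stage suffices for the $\ell_2$ bound on $\hat x'$ without re-optimization. This is clean because truncation by magnitude commutes with the global rotation $e^{i\theta}$, so the same $\theta$ witnesses both the $\ell_\infty$ bound on $\hat x$ and the $\ell_2$ bound on $\hat x'$. The algorithm is therefore simply ``run the decoder of \Theorem{ell_infty_ell_2} at sparsity $k'$ and return its top-$k$ truncation,'' and its measurement count and running time follow immediately from the accounting above.
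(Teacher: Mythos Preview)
Your reduction has two gaps.

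First, the compliance step is backwards. Being $(O(k),P)$-compliant says the phases of the top $O(k)$ coordinates lie in $P$; invoking \Theorem{ell_infty_ell_2} at sparsity $k'>k$ requires that the top $O(k')$ coordinates have phases in $P$, which is a strictly \emph{stronger} hypothesis. Enlarging the head does not come for free --- you need more coordinates to be compliant, not fewer.

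Second, and more seriously, truncating to the top $k$ coordinates loses a square root. Your own shell-game bound is $\|x_A\|_2^2-\|x_B\|_2^2=O(\sqrt{k/k'})\,\|x_{-k}\|_2^2$; making this $\le \epsilon\,\|x_{-k}\|_2^2$ forces $k'=\Theta(k/\epsilon^2)$, and the resulting $m=\Theta((k/(\epsilon^2\eta))\log n)$ overruns the claimed budget. ``Tuning the hidden constants'' cannot absorb a polynomial loss in $\epsilon$. The standard fix (exactly the computation in the proof of \Lemma{S}) is to keep the top $Ck$ coordinates for a constant $C>2$ rather than the top $k$: then $|A|\le k$ while $|B|\ge(C-1)k$, and the quadratic $ka^2-(C-1)kb^2$ with $a\le b+2\delta$ is at most $O(k\delta^2)=O(k/k')\,\|x_{-k}\|_2^2$, which is $O(\epsilon)\|x_{-k}\|_2^2$ already at $k'=\Theta(k/\epsilon)$. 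Even with this repair, black-boxing \Theorem{ell_infty_ell_2} at sparsity $k'=\Theta(k/\epsilon)$ costs $O((k/(\epsilon\eta))\log n)$ measurements, which matches the Corollary's $O((k/\min\{\eta,\epsilon\})\log n)$ only when one of $\eta,\epsilon$ is a constant; for general small $\eta,\epsilon$ one must open Algorithm~\ref{alg:ell_inf/ell_2} and set the \textsc{Count-Sketch} parameter to $\Theta(k/\epsilon)$ while leaving the phase-recovery bucket count at $\Theta(k/\eta)$, rather than treat the theorem as a black box.
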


It is also known that the classical compressive sensing problem with for-each $\ell_2/\ell_2$ error guarantee and constant failure probability requires $\Omega((k/\epsilon)\log(n/k))$ measurements~\cite{PW11}. Our result above achieves the optimal number of measurements up to a logarithmic factor.

%For the $\ell_2/\ell_2$ error guarantee with $1/\poly(n)$ failure probability, we shall increase the number of measurements to $\Oh(k/\epsilon^2 \cdot \log n)$, as in the following theorem. This improves on~\cite{Nakos} in terms of the approximation ratio, the failure probability and most importantly the decoding time. We note that the best decoding time of the existing algorithms is $\Oh(k^{1+o(1)} \poly(\log n))$. However, we restrict the set $P$ of valid phases to an equidistant set with gap at least $\eta$, that is, up to a rotation, $P = \{e^{2\pi i\frac{j}{m}}\}_{j=0,\dots,m-1}$ for some $m\leq 2\pi/\eta$.
%
%\begin{theorem}[$\ell_2/\ell_2$ with low failure probability]\label{thm:ell_2/ell_2_low_fail}
%There exists a randomized construction of $\Phi \in \mathbb{R}^{m \times n}$ and a deterministic decoding procedure $\mathcal{R}$ such that for each $x\in\C^n$ which is $(\Oh(k/\epsilon),P)$-compliant for some $P\subset \bS^1$ that is equidistant with gap at least $\eta$, the recovered signal $\hat x = \mathcal{R}(\Phi, |\Phi x|, P)$ satisfies the $\ell_2/\ell_2$ error guarantee with probability at least $1-\delta$,  and $\Phi$ has $m = \Oh((\epsilon\eta)^{-2}k \log (n/\delta))$ rows and $\mathcal{R}$ runs in time $\Oh((\epsilon\eta)^{-2}k \poly(\log(n/\delta)))$.
%\end{theorem}
%
%We note that the number of measurements becomes $\Oh(\epsilon^{-2}k \log n)$ when $\eta$ is a constant and the failure probability $\delta = 1/\poly(n)$, which is the usual case.

%!TeX root = phase.tex
\section{Toolkit}

%\begin{theorem}[Bernstein's inequality, {\cite[p9]{DP09}}]\label{thm:bernstein}

%Let $X_1,X_2,\ldots, X_n$ be i.i.d.\@ random variables with $X_i-\E X_i \leq K$ and $\sigma^2 = \sum_{i=1}^n \E X_i^2 - (\E X_i)^2$. Then
%\[
%\Pr\left\{ \sum_{i=1}^n X_i - \E \sum_{i=1}^n X_i \geq \lambda\right\} \leq \exp\left(-\frac{\frac12\lambda^2}{\sigma^2 + \frac{1}{3}K\lambda}\right).		
%\]
%
%\end{theorem}

The first two results concern heavy hitters, one for estimating the value of a heavy hitter and the other for finding the positions of the heavy hitters.

\begin{theorem}[\textsc{Count-Sketch},~{\cite{CCFC}}]\label{thm:CS}
There exist a randomized construction of a matrix $\Phi \in \mathbb{R}^{m \times n}$ with $m = \Oh(K \log n)$ and a deterministic algorithm $\mathcal{R}$ such that given $y = |\Phi x|$ for $x\in \C^n$, with probability at least $1-1/\poly(n)$, for every $i\in [n]$, the algorithm $\mathcal{R}$ returns in time $\Oh(\log n)$ an estimate $|\hat x_i|$ such that
\[
||x_i|-|\hat x_i||^2\leq \frac{1}{K}\|x_{-K}\|_2^2.
%||x_i|^2-|\hat x_i|^2|\leq \frac{1}{K}\|x_{-K}\|_2^2.
\]
\end{theorem}

\begin{theorem}[Heavy hitters,~{\cite{larsen2016heavy}}]\label{thm:HH}

There exist a randomized construction of a matrix $\Phi \in \mathbb{R}^{m \times n}$ with $m = \Oh(K \log n)$ and a deterministic algorithm $\mathcal{R}$ such that given $y = |\Phi x|$ for $x\in \C^n$, with probability at least $1-1/\poly(n)$ the algorithm $\mathcal{R}$ returns in time $\Oh(K \cdot \poly(\log n))$ a set $S$ of size $\Oh(K)$ containing all $(1/K)$-heavy hitters of $x$.
\end{theorem}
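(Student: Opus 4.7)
The plan is to adapt the cluster-preserving heavy-hitter construction of~\cite{larsen2016heavy}, originally analysed in the standard signed setting $y=\Phi x$, so that it operates on the phaseless input $y=|\Phi x|$. The crucial observation is that the identification routine in~\cite{larsen2016heavy} consumes only magnitudes of the bucket measurements: at no point does deciding whether to output coordinate $i$ require a signed estimate of $x_i$. Feeding the algorithm $|\Phi x|$ instead of $\Phi x$ therefore cannot affect the set it returns, and the $\Oh(k\log n)$ measurement bound together with the $\Oh(k\,\poly(\log n))$ running time should carry over unchanged.

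To make the construction well-defined for complex signals, I would first replace the Rademacher entries of the underlying CountSketch-style matrix by independent uniform points on $\bS^1$. Each bucket measurement then takes the form $y_j=\sum_{i:\,h(i)=j}\phi_{j,i}x_i$, and I would apply Bernstein's inequality (Theorem~\ref{thm:bernstein}) separately to the real and imaginary parts of the residual $\sum_{i\neq i^\ast,\,h(i)=j}\phi_{j,i}x_i$, to show that $|y_j|^2$ concentrates around $|x_{i^\ast}|^2$ whenever a heavy hitter $i^\ast\in H_k(x)$ is isolated in bucket $j$. With $\Oh(\log n)$ independent repetitions and a union bound over the $n$ coordinates, this pushes the per-coordinate failure probability below $1/\poly(n)$, which is the concentration input demanded by the clustering step.

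With these per-bucket magnitude estimates in hand, the cluster-preserving clustering of~\cite{larsen2016heavy} returns an $\Oh(k)$-sized candidate set $S\supseteq H_k(x)$ in time $\Oh(k\,\poly(\log n))$. Because that procedure inspects only the values $|y_j|$, its analysis transfers verbatim once the bucket-level concentration for complex phases is established. The main obstacle I anticipate is verifying that the clustering analysis of~\cite{larsen2016heavy}, which is phrased for real Rademacher sketches, genuinely uses only magnitudes and remains valid when the bucket values are complex-valued; this reduces to re-deriving their tail bounds as two independent sub-Gaussian sums, which is routine but must be done carefully so that the constants still absorb into the $\Oh(k\log n)$ measurement bound.
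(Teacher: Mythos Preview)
Your proposal is correct and matches the paper's own treatment: the paper does not give a proof but cites~\cite{larsen2016heavy} and remarks that (i) the identification routine there depends only on bucket magnitudes, so feeding in $|\Phi x|$ rather than $\Phi x$ changes nothing, and (ii) the extension to complex $x$ is straightforward. One minor deviation: you propose replacing the Rademacher entries by uniform $\bS^1$ phases, whereas the paper keeps $\Phi$ real (as in the statement) and handles complex $x$ by splitting $x$, not the sketch entries, into real and imaginary parts---but either route works and the difference is cosmetic.
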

We remark that the paper \cite{larsen2016heavy} does not consider complex signals but the extension to complex signals is straightforward. The algorithm is not designed for the phaseless sparse recovery either, the identification algorithm nevertheless works when the measurements are phaseless because it only relies on the magnitudes of the bucket measurements; see Theorem $2$ and Section B in \cite{larsen2016heavy}. Estimating the values of the candidate coordinates requires knowing the phases of the measurements but our theorem above does not concern this part. 

\begin{theorem}[{\cite{erdos_renyi}}]\label{thm:connected}
Let $V$ be a set of $n$ vertices. There exists an absolute constant $\kappa$ such that $\kappa n \log n$ uniform samples of pairs of distinct vertices in $V$ induce a connected graph with probability at least $1 - 1/\poly(n)$.
\end{theorem}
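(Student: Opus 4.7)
The plan is to prove connectivity by a first-moment/union-bound argument over cuts, which is the standard route to the connectivity threshold in the $G(n,m)$ random graph model. Concretely, I would show that with probability $1-1/\poly(n)$ the sampled edge set crosses every nontrivial bipartition $(S,V\setminus S)$, since this is equivalent to connectivity of the induced graph.

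First I would bound, for a fixed subset $S\subseteq V$ with $s:=|S|\leq n/2$, the probability that not a single one of the $m=c_R n\log n$ uniformly sampled pairs has one endpoint in $S$ and one outside. A uniform sample from the $\binom{n}{2}$ pairs misses the cut with probability $1-s(n-s)/\binom{n}{2}\leq 1-s/n$, so by independence across the $m$ samples the cut is uncrossed with probability at most $(1-s/n)^m\leq \exp(-c_R s\log n)=n^{-c_R s}$.

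Next I would union-bound over the choice of $S$, using the standard estimate $\binom{n}{s}\leq (en/s)^s\leq n^s$, obtaining
\[
\Pr[\text{disconnected}]\;\leq\;\sum_{s=1}^{\lfloor n/2\rfloor} n^s\cdot n^{-c_R s}\;=\;\sum_{s=1}^{\lfloor n/2\rfloor} n^{-(c_R-1)s}.
\]
For any constant $c_R>2$ the geometric sum is dominated by its $s=1$ term and is therefore at most $2n^{-(c_R-1)}=1/\poly(n)$, which yields the claim.

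The only mildly delicate step is the $s=1$ contribution, which corresponds to isolated vertices; this is exactly where the $n\log n$ threshold is forced, and choosing $c_R$ strictly larger than the isolated-vertex threshold is what drives the union bound. Because the remaining terms decay geometrically in $s$, the argument is robust to whether the $c_R n\log n$ samples are taken with or without replacement, so no further work is required to adapt the proof to either variant.
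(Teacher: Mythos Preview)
Your argument is correct. The union bound over cuts is the standard and cleanest way to get the $1-1/\poly(n)$ connectivity guarantee at $m=c_R n\log n$ samples, and each step checks out: the probability that a single uniform pair misses a fixed cut $(S,V\setminus S)$ with $|S|=s\le n/2$ is indeed at most $1-s/n$ (since $s(n-s)/\binom{n}{2}\ge s/n$), independence gives the $n^{-c_R s}$ bound, and the union bound over $\binom{n}{s}\le n^s$ subsets sums to a geometric series dominated by the $s=1$ term once $c_R>2$.

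As for comparison with the paper: there is nothing to compare. The paper does not prove this statement at all; it simply quotes it from the Erd\H{o}s--R\'enyi literature as a black box, so your self-contained derivation is strictly more than what the paper provides. One minor remark: your closing comment about with/without replacement is slightly imprecise as stated---sampling without replacement does not give literal independence across draws---but the conclusion is fine, since without-replacement sampling only makes the event ``some sample crosses the cut'' more likely (negative association, or a direct coupling argument), so your with-replacement bound applies a fortiori.
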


The following lemmata will be crucial in the analysis of our algorithms. % The first lemma concerns real signals.

%\begin{lemma}[relative sign test]\label{lem:rel_sign_test}
%Let $x,y,n_1,n_2,n_3\in\R$ such that $|n_1|,|n_2|,|n_3| \leq \frac{1}{3}\min\{|x|,|y|\}$. Given only $|x| + n_1, |y|+ n_2, |x + y + n_3|$, it is possible to find the relative sign between $x$ and $y$.
%\end{lemma}
%
%\begin{proof}
%Let $a = |x| + n_1 , b = |y| + n_2, c=|x+y+n_3|$. We check whether $|c - |a+b|| < |c - |a-b||$, and if this is the case we declare that $x$ and $y$ have the same sign, otherwise we declare that they have opposite signs. Let us analyse the case that $x$ and $y$ have the same sign. Assume that $x,y>0$ and, without loss of generality, that $a>b$. Then, $|c - |a+b|| = |n_3-n_1-n_2| < 2y + n_3 - n_1 -n_2= |c- |a-b|||$. Similarly, if $x$ and $y$ have opposite signs, we have that $|c - |a+b|| \geq |c - |a-b||$, which completes the proof of the lemma.
%\end{proof}

%\begin{proposition} Suppose that $z = ae^{i\theta}$ and $z' = a'e^{i\theta'}$, then
%$|z-z'|^2 = (a-a')^2 + 4aa'\sin^2\frac{\theta-\theta'}{2}$.
%\end{proposition}
%\begin{proof}
%\begin{align*}
%|z-z'|^2 = (z-z')(\overline{z} - \overline{z'}) = |z|^2 + |z'|^2 - 2\Re z\overline{z'} &= a^2 + a'^2 -2\Re(aa'e^{i(\theta-\theta')}) \\
%&= a^2 + a'^2 - 2aa'\cos(\theta-\theta')\\
%&= (a-a')^2 + 2aa'(1-\cos(\theta-\theta'))\\
%&= (a-a')^2 + 4aa'\sin^2\frac{\theta-\theta'}{2}.\qedhere
%\end{align*}
%\end{proof}

%The next lemmata concerns complex signals, trying an analogous result to that for real signals.

\begin{lemma}\label{lem:phase_test}
Let $\theta_0 \in (0, \pi/12)$ be a constant. There exist constants $\epsilon_0, c_0, c > 0$ such that the following holds.
Suppose that $x,y,n_1,n_2,n_3\in \C$ such that 
$|n_1|,|n_2|,|n_3|\leq \epsilon\min\{|x|,|y|\}$ for some $\epsilon\leq \epsilon_0$. Denote by $\theta$ be the phase difference between $x$ and $y$. Then given the norms
\[
|x+n_1|, |y+n_2|, |x+y+n_1+n_2+n_3|,
\]
we can recover $|\theta|$ up to an additive error of $c_0\sqrt{\epsilon}$. Furthermore, we can recover $|\theta|$ up to an additive error of $c\epsilon$ whenever $|\theta|\in (\theta_0,\pi-\theta_0)$ .
\end{lemma}
\begin{proof}
If we know $|x|$, $|y|$ and $|x+y|$, it follows from the Law of Cosines that
\[
\cos(\pi-|\theta|) = \frac{|x|^2 + |y|^2 - |x+y|^2}{2|x|\cdot|y|} = \frac{-\Re x\bar y}{|x|\cdot|y|}.
\]
Let $x' = x + n_1$ and $y' = y+n_2$ then $x + y + n_1 + n_2 + n_3 = x' + y' + n_3$. Suppose the phase difference between $x'$ and $y'$ is $\theta'$, then we would pretend $x'+y'+n_3$ to be $x'+y'$. We compute
\[
\xi = \frac{|x'|^2 + |y'|^2 - |x'+y'+n_3|^2}{2|x'|\cdot|y'|} = \frac{-\Re x'\overline{y} - \Re x'\overline{n_3} - \Re y\overline{n_3} - |n_3|^2}{|x'|\cdot|y'|}.
\]
Hence
\[
|\xi -\cos(\pi-|\theta'|)| \leq \frac{|x'||n_3|+|y'||n_3|+|n_3|^2}{|x'|\cdot |y'|}\leq \frac{\epsilon}{1-\epsilon} + \frac{\epsilon}{1-\epsilon} + \frac{\epsilon^2}{1-\epsilon^2}\leq 3\epsilon,
\]
provided that $\epsilon \leq 1/4$.

Similarly we have
\[
\cos(\pi-|\theta'|) = \cos(\pi-|\theta|) \cdot \frac{|x|}{|x+n_1|}\cdot \frac{|y|}{|y+n_2|} + \nu,\quad |\nu|\leq 3\epsilon,
\]
and thus
\[
\cos(\pi-|\theta'|)-\cos(\pi-|\theta|) = \cos\theta\left(\frac{|x|}{|x+n_1|}\cdot\frac{|y|}{|y+n_2|}-1\right) + \nu.
\]
Note that $\frac{|x|}{|x+n_1|}, \frac{|y|}{|y+n_2|}\in [\frac{1}{1+\epsilon}, \frac{1}{1-\epsilon}]$,  it follows that
\[
|\cos(\pi-|\theta'|)-\cos(\pi-|\theta|)| \leq c_1\epsilon
\]
for some absolute constant $c_1 > 0$, and thus
\[
|\xi - \cos(\pi-|\theta|)| \leq (c_1+3)\epsilon.
\]
Therefore we can estimate $|\theta|$ up to an additive error of $c_0\sqrt{\epsilon}$ for some absolute constant $c_0 > 0$; and furthermore, there exists an absolute constant $c > 0$ such that when $\theta_0 = \pi/9$, $\epsilon$ is small enough and $|\theta| \in (\theta_0, \pi-\theta_0)$, it holds that
\[
|\arccos\xi - |\theta|| \leq c\epsilon.\qedhere
\]

\end{proof}

\begin{lemma}\label{lem:phase_difference}
Let $x,y,n_1,n_3,\epsilon,\theta_0,c$ be as in Lemma~\ref{lem:phase_test} and further assume that $\epsilon < \theta_0/(2c)$.  Suppose that $\arg y = \arg x + \theta$ for some $\theta\in [0,2\pi)$, where addition is modulo $2\pi$. Given the norms
\[
|x+n_1|, |y|, |x+y+n_1+n_3|, \left|x+\beta y+n_1+n_3\right|, \quad \beta = e^{i\theta_0},
\]
we can recover $\theta$ up to an additive error of $c\epsilon$, provided that $\theta \in (2\theta_0,\pi-2\theta_0)\cup(\pi+2\theta_0,2\pi-2\theta_0)$.
\end{lemma}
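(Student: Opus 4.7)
The plan is to recover $\theta$ in two stages: first use Lemma~\ref{lem:phase_test} on the first three norms to pin down the \emph{unoriented} angle $\theta_u := \min(\theta,2\pi-\theta)\in[0,\pi]$, and then use the fourth norm $|x+\beta y+n_1+n_3|$ to decide which of $\theta_u$ or $2\pi-\theta_u$ is the true oriented $\theta$.

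For the first stage, I would invoke Lemma~\ref{lem:phase_test} with $n_2=0$ on the norms $|x+n_1|$, $|y|$, $|x+y+n_1+n_3|$. The hypothesis on $\theta$ guarantees $\theta_u\in(2c\epsilon,\pi-2c\epsilon)$, comfortably inside the ``furthermore'' range of that lemma, so we obtain an estimate $\hat\theta_u$ with $|\hat\theta_u-\theta_u|\leq c\epsilon$.

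For the second stage, I would form the difference
\[
D := |x+y+n_1+n_3|^2 - |x+\beta y+n_1+n_3|^2.
\]
Expanding both squares and using that the noise $n_1+n_3$ is \emph{identical} in the two measurements, the $|n_1+n_3|^2$ terms cancel and one is left with
\[
D = \bigl(|x+y|^2 - |x+\beta y|^2\bigr) + 2\Re\bigl((1-\beta)\,y\,\overline{n_1+n_3}\bigr).
\]
A product-to-sum identity evaluates the signal term to $4|x||y|\sin(\theta+c\epsilon)\sin(c\epsilon)$, whose magnitude is $\Omega(c^2\epsilon^2|x||y|)$ on the valid range of $\theta$, while $|1-\beta|=\Oh(c\epsilon)$ and $|n_1+n_3|\leq 2\epsilon\min\{|x|,|y|\}$ bound the residual noise term by $\Oh(c\epsilon^2|x||y|)$. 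For $c$ large enough the signal beats the noise, so $\operatorname{sign}(D)=\operatorname{sign}(\sin(\theta+c\epsilon))$, which is positive exactly when $\theta\in(0,\pi)$. Accordingly, output $\hat\theta_u$ if $D>0$ and $2\pi-\hat\theta_u$ if $D<0$; the error is at most $c\epsilon$.

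The main obstacle is the sign disambiguation: the signal $|x+y|^2-|x+\beta y|^2$ is only of order $\epsilon^2|x||y|$, because $\beta$ is a rotation by just $2c\epsilon$, whereas a naive per-measurement bound on the perturbation of either squared norm in isolation is already of order $\epsilon|x||y|$, a full order of magnitude too large. The comparison succeeds only because $n_1+n_3$ is \emph{shared} across the two measurements, so its dominant linear and quadratic contributions cancel in $D$ and the residual is pushed down to the same $\Oh(\epsilon^2)$ scale as the signal, where it can be dominated by choosing the constant $c$ appropriately.
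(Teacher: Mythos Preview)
Your proof is correct. The first stage is identical to the paper's. The second stage differs in presentation: the paper simply invokes Lemma~\ref{lem:phase_test} a second time with $\beta y$ in place of $y$ to obtain an estimate $\hat\phi$ of the unoriented angle between $x$ and $\beta y$, and decides the sign of $\theta$ by checking whether $\hat\phi-\hat\theta_u$ is positive or negative (the true value is $\pm 2c\epsilon$, and each estimate has error at most $c\epsilon$). You instead work with $D=|x+y+n_1+n_3|^2-|x+\beta y+n_1+n_3|^2$ directly and exploit the shared noise $n_1+n_3$ to show $\operatorname{sign}(D)$ is reliable.

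These routes are essentially the same test under the hood: unwinding the Law-of-Cosines formula in Lemma~\ref{lem:phase_test}, the paper's comparison $\hat\phi\gtrless\hat\theta_u$ is monotone in the difference of cosines, which equals $-D/(2|x+n_1||y|)$. What differs is the error analysis. The paper's modular argument just reuses the $c\epsilon$ angle-error bound twice; your direct expansion makes the shared-noise cancellation explicit and yields a cleaner signal-to-noise ratio $\sin(c\epsilon)/(2\epsilon)\approx c/2$. One caveat: your phrase ``for $c$ large enough'' relies on the fact that the constant $c$ in Lemma~\ref{lem:phase_test} is existential and can be taken as large as desired (enlarging $c$ only weakens that lemma), so you are not free to choose a fresh constant here but you are free to assume the fixed $c$ was chosen large enough at the outset. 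With that understood, both arguments go through.
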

\begin{proof}
By Lemma~\ref{lem:phase_test}, we can recover $|\theta|$ up to an additive error of $c\epsilon$ when $|\theta|\in (\theta_0,\pi-\theta_0)$. To determine the sign, we rotate $y$ by angle $\theta_0$ and test the angle between $x$ and this rotated $y$ again by Lemma~\ref{lem:phase_test}. Suppose that the angle between $x$ and $\beta y$ is $\phi$ and we have an estimate of $|\phi|$ up to an additive error of $c\epsilon$, provided that $|\phi|\in (\theta_0,\pi-\theta_0)$, which is satisfied when $|\theta|\in (2\theta_0,\pi-2\theta_0)$. It holds in this case that 
\begin{gather*}
|\phi| - |\theta| = \begin{cases}
						\theta_0,& \theta > 0;\\
						-\theta_0,& \theta < 0.	
					\end{cases}
\end{gather*}
The left-hand side is approximated up to an additive error of $2c\epsilon$ and thus we can distinguish the two cases.
\end{proof}

\begin{lemma}[relative phase estimate]\label{lem:all_phase_difference}
Let $x,y,n_1,n_3,\epsilon,c_0,\theta_0,c$ be as in Lemma~\ref{lem:phase_difference} and further assume that $\epsilon < \min\{\theta_0/(2c),(\theta_0/(2c_0))^2\}$. Suppose that $\arg y = \arg x + \theta$ for some $\theta\in [0,2\pi)$, where addition is modulo $2\pi$. Given the norms
\[
|x+n_1|, |y|, |x+e^{i(\theta_0 j + \frac{\pi}{2}\ell)}y+n_1+n_3|, \quad j, \ell = 0,1
\]
we can recover $\theta$ up to an additive error of $c\epsilon$.
\end{lemma}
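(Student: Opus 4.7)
The plan is to invoke Lemma~\ref{lem:phase_difference} twice, using two rotated views of $y$ in order to cover every possible value of $\theta$. The $\ell = 0$ measurements $|x+n_1|$, $|y|$, $|x+y+n_1+n_3|$, $|x+e^{2c\epsilon i}y+n_1+n_3|$ feed Lemma~\ref{lem:phase_difference} on the pair $(x,y)$ and recover $\theta$ to additive error $c\epsilon$ whenever $\theta$ lies outside a $2c\epsilon$-neighborhood of $\{0,\pi\}$ on the circle. The $\ell = 1$ measurements are the analogous quantities for the pair $(x, iy)$, and since $|iy|=|y|$ the noise hypotheses of Lemma~\ref{lem:phase_test} are preserved for this pair; a second invocation therefore recovers $\theta + \pi/2 \pmod{2\pi}$ whenever $\theta$ lies outside a $2c\epsilon$-neighborhood of $\{\pi/2, 3\pi/2\}$. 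The assumption $c\epsilon \le \pi/9$ forces $4c\epsilon < \pi/2$, so these two excluded sets are disjoint, and consequently every $\theta \in [0, 2\pi)$ lies in the good region of at least one invocation.

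To decide which invocation to trust, I first call Lemma~\ref{lem:phase_test} on the triple $(|x+n_1|, |y|, |x+y+n_1+n_3|)$ to obtain a coarse estimate $\hat\theta_0$ of the unoriented angle between $x$ and $y$ with additive error at most $c_0\sqrt{\epsilon}$. If $\hat\theta_0$ is at distance at least $2c\epsilon + c_0\sqrt{\epsilon}$ from both $0$ and $\pi$, then the true unoriented angle must be at least $2c\epsilon$ from these points, so $\theta$ lies in the good region of the first invocation and I output its estimate. Otherwise $\theta$ is within $2c\epsilon + 2c_0\sqrt{\epsilon}$ of $0$ or $\pi$, which places $\theta + \pi/2$ in a small neighborhood of $\pi/2$ or $3\pi/2$ and, provided $4c\epsilon + 2c_0\sqrt{\epsilon} < \pi/2$, safely inside the good region of the second invocation; I then return its estimate shifted by $-\pi/2 \pmod{2\pi}$.

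The main obstacle is calibrating the absolute constants $c$ and $c_0$ so that the deferral case actually meets the precondition of the second invocation, namely $4c\epsilon + 2c_0\sqrt{\epsilon} < \pi/2$. This is a routine inequality verification using $c\epsilon \le \pi/9$ together with the inherited hypothesis $\epsilon \le 1/9$ of Lemma~\ref{lem:phase_test}, with $c$ chosen sufficiently small relative to $c_0$ so that the coarse $c_0\sqrt{\epsilon}$ slack does not spoil the decision rule.
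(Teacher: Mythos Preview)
Your high-level strategy is the same as the paper's: apply Lemma~\ref{lem:phase_difference} to the pair $(x,y)$ using the $\ell=0$ data and to the pair $(x,iy)$ using the $\ell=1$ data, then select whichever output is reliable. Where you differ is in the selection rule. The paper does not call Lemma~\ref{lem:phase_test} separately to obtain a $c_0\sqrt{\epsilon}$-accurate coarse estimate; instead it simply accepts the first output $\tilde\theta$ whenever $\tilde\theta$ lands in the shrunk region $I'=(3c\epsilon,\pi-3c\epsilon)\cup(\pi+3c\epsilon,2\pi-3c\epsilon)$, and otherwise accepts the second output (in the corresponding rotated region $J'$). The paper then just checks that $I'\cup J'$ covers $\bS^1$ once $c\epsilon<\pi/8$. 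Because this rule works entirely at the $c\epsilon$ scale, the $\sqrt{\epsilon}$ term never enters, and the only numerical check is $c\epsilon<\pi/8$, which is implied by the hypothesis $c\epsilon\le\pi/9$.

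Your route is logically sound but the last paragraph has a constant-calibration issue. You need $4c\epsilon+2c_0\sqrt{\epsilon}<\pi/2$, and you propose to secure it ``with $c$ chosen sufficiently small relative to $c_0$''. That is not available to you: $c$ and $c_0$ are both fixed absolute constants produced by the proof of Lemma~\ref{lem:phase_test}, and moreover the statement of the present lemma promises error exactly $c\epsilon$, so $c$ is pinned. Under the sole hypotheses $c\epsilon\le\pi/9$ and $\epsilon\le 1/9$ the term $2c_0\sqrt{\epsilon}$ need not be below $\pi/2-4\pi/9=\pi/18$, so the inequality is not automatic. The fix is either to argue (as the paper implicitly does) directly from the $c\epsilon$-accurate outputs and avoid the $\sqrt{\epsilon}$ detour, or to strengthen the hypothesis on $\epsilon$ so that $c_0\sqrt{\epsilon}$ is dominated by a small multiple of $c\epsilon$; but the latter would change the statement.
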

\begin{proof}
From Lemma~\ref{lem:phase_difference}, we know that we can recover $\theta$ up to an additive error of $c\epsilon$ when $\theta\in I$, where $I = (2\theta_0,\pi-2\theta_0)\cup(\pi+2\theta_0,2\pi-2\theta_0)$. We accept the estimate if the estimate is in the range of $I' := (3\theta_0, \pi-3\theta_0)\cup (\pi+3\theta_0,2\pi-3\theta_0)$. It follows from the assumption of $\epsilon$ that if we ever accept the estimate, $|\theta|$ must be contained in $(\frac{5}{2}\theta_0,\pi-\frac{5}{2}\theta_0)$ and we have therefore an $c\epsilon$ additive error.

Consider the phase difference between $x$ and $e^{i\pi/2}y$ and suppose that $\arg(e^{i\pi/2}y) = \arg x + \phi$, then we can recover $\phi$ up to an additive error of $c\epsilon$ for $\phi \in I$, that is, for $\theta\in J : = (\pi/2+2\theta_0,3\pi/2-2\theta_0)\cup(-\pi/2+2\theta_0,\pi/2-2\theta_0)$, which is $I$ rotated by $\pi/2$. We accept the estimate when it is in the range of $J' := (\pi/2+3\theta_0,3\pi/2-3\theta_0)\cup(-\pi/2+3\theta_0,\pi/2-3\theta_0)$.

Note that $I'\cup J'$ covers the whole $\bS^1$ when $\theta_0 < \pi/12$.
\end{proof}

%!TeX root = phase.tex
\section{Noiseless Signals}

We shall need the following theorem from~\cite{Nakos}, which shows that one can recover an exactly $K$-sparse signal up to a global phase using $\Oh(K)$ measurements and in time $\Oh(K^2)$.\footnote{The runtime is dominated by Prony's method, which can be implemented in $O(K^2)$ time using a Lanczos process (see, e.g.~\cite{vandermonde_factorization}) and solving a linear Vandermonde system (see, e.g.~\cite{vandermonde_system}). A detailed discussion can be found in the arXiv version of~\cite{Nakos}.}
%
%We shall need the following theorem from~\cite{Nakos}, which shows that one can recover an exactly $K$-sparse signal up to a global phase using $\Oh(K)$ measurements and in time nearly quadratic in $k$. 

\begin{theorem}[{\cite{Nakos}}]\label{thm:noiseless_nakos}
There exists a matrix $M \in \mathbb{C}^{(6k-2) \times n}$, such that given given $y = | M x|$ for $x\in \C^n$ such that $\|x\|_0\leq K$ we can recover $x$ up to a rotation in time $\Oh(K^2)$.
\end{theorem}

We are now ready to prove Theorem~\ref{thm:noiseless}, which we restate below.

\noiselesstheorem*

Let us first give some intuition on the algorithm and the proof. We hash all $n$ coordinates to $\Theta(k/\log k)$ buckets, and in each bucket we apply the algorithm in Theorem~\ref{thm:noiseless_nakos} with $K = O(\log k)$. By standard concetration bounds there exist at most $O(\log k)$ coordinates of $\supp(x)$ in each bucket, and hence, conditioned on this event, we can find them deterministically along with their relative phases. Next we find all the relative phases by finding the relative phases ``between buckets'', which means finding the relative phases between representatives from each bucket. To that end, we build a random graph with $O(k/\log k)$ buckets and $O(k)$ edges, where each vertex represents a bucket and each edge has a label that indicates the relative phase between its two endpoints. If the graph is connected, a standard traversal will give the desired result. We show how to sample random edges from this graph by performing a careful subsampling on the vector $x$. The number of measurements and the running time follows by standard results in random graph theory. Below is the formal proof.

 \begin{proof}
Let $B = k/(c\log k)$ and $h: [n] \rightarrow [B]$ a random hash function, where $c$ is a constant. We hash all $n$ coordinates into $B$ buckets using $h$. It is a typical application of Chernoff bound that the buckets have small size, more specifically, for some constant $\bar c > c$,
\begin{equation}\label{eqn:small_bucket}
\Pr\left\{ \exists j \in [B]: |h^{-1}(j) \cap \mathrm{supp}(x)| > \bar c \log k \right\} \leq \frac{1}{\poly(k)}.	
\end{equation}
To see this, for $i \in [n]$ and $j \in [B]$, let $X_{i,j}$ be the indicator variable of the event $h(i) = j$. Then $\mathbb{E}X_{i,j} = 1/B$, and thus $\E\sum_{i\in\supp(x)} X_{i,j} = k/B = \bar c\log k$. Note that $X_{i,j}$ are negatively associated, thus the Chernoff bound can be applied, which, with appropriate constants, yields that for each $j\in [B]$,
\[
\Pr\left\{ |h^{-1}(j)\cap \supp(x)| > 5\log k \right\} = \Pr\left\{ \sum_{i\in \supp(x)} X_{i,j} > \bar c\log k\right\} \leq \frac{1}{\poly(k)}.
\]
Take a union bound over all $j \in [B]$ gives \eqref{eqn:small_bucket}.

In each bucket we run the algorithm of Theorem~\ref{thm:noiseless_nakos} with $K = \bar c\log k$. The number of measurements used for each bucket is $\Theta(\log k)$. The failure probability is $2^{-\Theta(\log k)}$, and this allows us to take a union-bound to conclude correct execution over all $B$ buckets. For each $j \in [B]$, we can find $x_{h^{-1}(j)}$ up to a global phase. We set $\supp(x) = \bigcup_{j=1}^B \supp(x_{h^{-1}(j)})$ and, for notational convenience, let $s = |\supp(x)|$ and $\ell$ be such that $2^{\ell-1} < s \leq 2^\ell$.

Now, let $B' = 2^\ell/(c \ell)$ and let $h':[n]\to [B']$ be a random hash function. We hash all $n$ coordinates into $B'$ coordinates. Similar to the above, with probability $\geq 1 - 2^{-\Omega(\ell)}$ every bucket contains at most $\bar c\ell$ nonzero coordinates of $x$. In each bucket, we run the algorithm of Theorem~\ref{thm:noiseless_nakos} with $K=\bar c\ell$, so that in $\Oh(\ell)$ measurements we can find $x_{(h')^{-1}(j)}$ up to a global phase for each $j\in [B']$, so it remains to find the relative phases across different $x_{(h')^{-1}(j)}$. 

Let $F_\ell$ be a matrix of $n$ columns and $2\alpha\kappa 2^\ell$ rows, where $\alpha$ is a sufficiently large constants and $\kappa$ is the same constant in Theorem~\ref{thm:connected}. The rows are split into $\alpha \kappa2^\ell$ groups, each has $2$ rows. For each group $j$, the first row $(F_\ell)_{2j-1}$ has independent random $\{0,1\}$-entries such that $\E[ (F_\ell)_{2j-1,t}] = 2^{-\ell}$. The second row has the form $(F_\ell)_{2j,t} = \rho_{j,t}\cdot (F_\ell)_{2j-1,t}$, where $\{\rho_{j,t}\}$ are independent uniform random variables in $\{1,i\}$.

Next we describe how to recover the relative phases across different $x_{(h')^{-1}(j)}$. Define a row index set $\hat J$ to be
\[
\hat J = \left\{j: \left|\supp((F_\ell)_{5(j-1)+1})\cap \supp(x)\right| = 2\right\}.
\]
Observe that for each $j$,
\[
\Pr\{j\in \hat J\} = 
\binom{s}{2} \frac{1}{2^{2\ell}} \left(1-\frac{1}{2^\ell}\right)^{s-2} \geq \frac{1}{2}\left(\frac{s-1}{2^\ell}\right)^2 \left(1-\frac{1}{s}\right)^{s-2}\geq \frac{1}{2}\left(\frac 12\right)^2\frac{1}{e},
\]
that is, each $j$ is contained in $\hat J$ with probability at least an absolute constant. For each such $j$, let $\{u_j, v_j\} = \supp((F_\ell)_{2j-1})\cap \supp(x)$. It is clear that $\rho_{j,u_j} \neq \rho_{j,v_j}$ with probability $1/2$. When this event happens, we say $j$ is good. We have shown that each $j$ is good with probability at least an absolute constant. Let $J\subseteq \hat J$ denote the set of good $j$'s.

We shall focus on the measurements corresponding to the groups in $J$. From each $j\in J$ we can obtain a random pair $\{u_j,v_j\} \subseteq \supp(x)$ and, moreover, $(h'(u_j),h'(v_j))$ is uniformly random on $[B']\times [B'] \setminus \{ (b,b), b\in [B'] \}$ (in other words, such a measurement corresponds to a random pair of buckets). Since $j$ is good, we obtain both $|x_{u_j} + x_{v_j}|$ and $|x_{u_j} + i x_{v_j}|$. Since we also know $|x_{u_j}|$ and $|x_{v_j}|$, we can infer the relative phase between $x_u,x_v$. The relative phases we obtain are always correct since the signal is noiseless. Let $\mathcal{M}$ be the ordered set of such pairs $(u_j,v_j)$ along with the label that we obtain about the relative phase between $u_j$ and $v_j$. By taking $\alpha$ to be big enough, we have $|\mathcal{M}| = |J|\geq \kappa B'\log B'$ (since $B'\log B' = \Theta(2^\ell)$) with probability at least $1-2^{-\Omega(|J|)}$. We run a depth-first search on $\mathcal{M}$ to infer the relative phases. We infer all the relative phases correctly when the $\mathcal{M}$ is connected, which, by Theorem~\ref{thm:connected}, happens with
probability at least $1-1/\poly(B') \geq 1 - 2^{-\Omega(\ell)}$.

So far we have shown that with probability at least $1-2^{-\Omega(\ell)}$ we can recover $\hat x$ up to a global phase difference. To improve the success probability to $1-1/\poly(k)$, we can repeat the procedure above by $\Theta(1)$ times for $\ell > \lceil \frac{1}{2}\log k\rceil$ and $\Theta(\log k)$ times for $\ell \leq \lceil \frac{1}{2}\log k\rceil$ and take the relative phase pattern that appears most often.

\paragraph*{Number of Measurements} The total number of measurements is therefore (taking a union over all possible values of $\ell=1,2,\dots,\lceil\log k\rceil$)
\[ 
\frac{k}{c\log k}\cdot \Theta(\log k) + \sum_{\ell > \lceil \frac{1}{2}\log k \rceil} \left(\frac{2^\ell}{c\ell} \cdot \Theta(\ell) + \alpha \kappa 2^\ell\right)\cdot\Theta(1) + \sum_{\ell\leq \lceil \frac12 \log k \rceil} \left(\frac{2^\ell}{c\ell} \cdot \Theta(\ell) + \alpha \kappa 2^\ell\right)\cdot\Theta(\log k) = \Oh( k )
\]
as desired. 

\paragraph*{Runtime} Recovering the in-bucket signals takes time $\Oh(B \log^2 k (\cdot \poly(\log \log k)) = \Oh(k\log k \cdot \poly(\log \log k))$. Finding the set $J$ takes time $\Oh(k)$ (see Remark~\ref{rem:runtime}). Thus, it takes time $\Oh(k + B'\log B') = \Oh(k)$ to build a pattern of relative phase differences and $\Oh(k\log k)$ time to build all patterns\footnote{In fact, building the patterns can be done in $O(k)$ time, since the $\log k$ factor due to the number of repetitions occurs only for ``small'' $\ell$, that is, $\ell \leq \lceil \frac{1}{2}\log k \rceil$. Nevertheless, this will not make any difference in our final running time.}. Finding the most frequent pattern can be implemented with a sorting followed by a linear scan. There are $\Theta(\log k)$ patterns and the comparison of two patterns takes time $\Oh(B)$. Hence finding the most frequent pattern takes time $\Oh(\log k \log\log k \cdot B + B \log k) = \Oh(k\log\log k)$.  The overall runtime is therefore $\Oh(k\log k)$.
\end{proof}

\begin{remark}\label{rem:noiseless_tradeoff}
Note that if we hash to $k^{1-\alpha}$ buckets, solve in each bucket and then combine the buckets, we can obtain a failure probability at most $\exp(- k^\alpha)$ and a running time of $\Oh(k^{1+\alpha})$. This is a trade-off between decoding time and failure probability that the previous algorithms did not achieve.
\end{remark}

\begin{remark}\label{rem:runtime}
We show how to implement efficiently the routine which finds the set of rows of $F$ whose support intersects $\supp(x)$ at  exactly two coordinates. For $\ell > \lceil \frac{1}{2} \log k \rceil $, in each repetition, the expected number of rows of $F_{\ell}$ containing an index $i \in \supp(x)$ is $ 2^{-\ell} \alpha \kappa 2^{\ell} = \alpha \kappa$. So the probability that there are more than $2\alpha \kappa 2^{\ell}$ pairs $(i,q)$ such that $i \in \supp(x) \cap \supp(F_{\ell})_q$ is $\exp(-\Omega(2^{\ell})) < 1/\poly(k)$. A similar result can be obtained for $ \ell \leq \lceil\frac{1}{2} \log k\rceil$. Suppose that $F_\ell$ is stored using $n$ lists of nonzero coordinates in each column, we can afford to iterate over all such pairs $(i,q)$, keep an array $C[q]$ that holds the cardinality of $\supp(x)\cap \supp((F_{\ell})_q)$. At the end, we find the values of $q$ with $C[q] = 2$. This implementation makes the algorithm run in $\Oh( k \log k)$ time.
\end{remark}

\section{ $\ell_{\infty}/\ell_2$ Algorithm}\label{sec:ell_inf/ell_2}

We first give a high-level overview of our algorithm. Invoking Theorems~\ref{thm:CS} and~\ref{thm:HH}, we can find a set $S$ of size $O(k)$ that contains all heavy hitters. It remains to find their relative phases. Our approach is very different from the previous section; we shall exploit the presence of the rest $n- O(k)$ coordinates to find the relative phases among the heavy hitters. We downsample the signal at the rate of $\Theta(1/k)$, and combine with gaussians to form $L = \sum_{i \in T} g_i x_i$, where $T$ is the support of the downsampled signal. With constant probability, $T \cap S = \emptyset$ (note that we do not know $S$ when designing the measurements, we only know $T$). We then hash every coordinate in $[n]\setminus T$ to $\Theta(k)$ buckets, combine with random rotations and add $L$ to each bucket. We repeat with fresh randomness $\Theta( \log n)$ times. Then, for every $i \in S$ we can find the buckets in which it is isolated from $S\setminus \{ i\}$ and its relative phase with respect to $L$ (note that $L$ is present in every bucket). This allows us to find the relative phase among all $i \in S$. The risk is that $L$ might be $0$, in which case the argument above would not work. In this case, however, $T$ would be an empty set, which in turn implies that $x$ would be $O(k)$-sparse, and hence we could run the algorithm from the previous section.

\bigskip

The remaining of the section is devoted to the details of our algorithm and its analysis.

We set $\epsilon_0,c_0,c$ to be the constants in Lemma~\ref{lem:all_phase_difference} and $\epsilon = \min\{\epsilon_0,\eta/(5c),\pi/(25c), \pi^2/(145c_0^2)\}$. Let $P\subseteq \bS^1$ be $\eta$-distinct and suppose that it contains the phases of all $1/(\tilde Ck)$-heavy hitters for some (large) constant $\tilde C$.

We first describe our construction of the measurement matrix $\Phi$ and then present the analysis and the recovery algorithm. Let $R = c_R\log n$ for some constant $c_R$ to be determined. The overall sensing matrix $\Phi$ is a layered one as
\[
\Phi = \left[ 
		\begin{array}{c}
			\Phi_{\textsc{HH}} \\
			\Phi_{\textsc{CS}} \\
			\Phi_{\textsc{L}} \\
			\Phi_1\\
			\vdots \\
			\Phi_R
		\end{array}
	\right].
\]
Here
\begin{itemize}
	\item $\Phi_{\textsc{HH}}$ is the sensing matrix in Theorem~\ref{thm:HH} with $K=k$.
	\item $\Phi_{\textsc{CS}}$ is the sensing matrix of \textsc{Count-Sketch} with $K = Ck/\epsilon$.
	\item $\Phi_{\textsc{L}}$ is an $R\times N$ matrix, the $r$-th row ($r\in [R]$) of which is 
	\[
		\rho_r = \begin{pmatrix} \eta_{r,1} g_{r,1} & \eta_{r,2} g_{r,2} & \cdots & \eta_{r,n} g_{r,n} \end{pmatrix},
	\]
	where $\eta_{r,i}$ are i.i.d.\@ Bernoulli variables with $\E \eta_{r,i} = 1/(C_0 k)$ and $g_{r,i}$ are i.i.d.\@ $\N(0,1)$ variables.
	\item Each $\Phi_r$ ($r\in[R]$) is a matrix of $4B$ rows defined as follows, where $B = C_Bk/\epsilon$. Let $h_r : [n]\to [B]$ be a random hash function and $\{\sigma_{r,i}\}_{i=1}^n$ be random signs. Define a $B\times n$ hashing matrix $H_r$ as 
	\[
		(H_r)_{j,i} = \begin{cases}	
						(1-\eta_{r,i})\sigma_{r,i}, &i\in h_r^{-1}(j);\\
						0, &\text{otherwise}.
					\end{cases}
	\]
	The $4B$ rows of $\Phi_r$ are defined to be
	\[
		e^{i(\theta_0 \ell_1 + \frac{\pi}{2}\ell_2)} \rho_r  + (H_r)_{b,\cdot}, \qquad  \ell_1,\ell_2 = 0,1,\quad b=1,\dots,B.
	\]
\end{itemize}

\begin{algorithm}[tb]
\begin{algorithmic}[1]
	\algnotext{EndFor}
	\State $S\gets $ the set returned by the algorithm in Theorem~\ref{thm:HH} with $K=k$
	\State Run a \textsc{Count-Sketch} algorithm with $K=Ck/\epsilon$ to obtain an approximation $|\hat x_i|$ to $|x_i|$ for all $i\in S$
	\State $L_r\gets |\sum_{i=1}^n \eta_{r,i} g_{r,i} x_i|$ for all $r\in [R]$\label{alg:ell_infty/ell_2:L}
	\If {$L_1 = 0$}
		\State Run the algorithm for the noiseless case with sparsity $C_2k$
	\Else
		\For {each $r\in [R]$} \label{alg:ell_infty/ell_2:hash_loop_begin}
			\State $S_r' \gets \{i\in S: |\hat x_i|\geq L_r\}$	
			\State $b_{r,i} \gets h_r(i)$ for all $i\in S_r'$
			\State $I_r \gets \{i\in S_r': b_{r,i}\neq b_{r,j}\text{ for all }j\in S_r'\setminus\{i\}\}$
			\For {each $i\in I_r$}
				\State $\tilde\theta_{r,i} \gets $ estimate of phase difference between $x_i$ and $\langle \rho_r,x\rangle$ using Lemma~\ref{lem:all_phase_difference}\label{alg:line:phase_test}
			\EndFor
%			\State Choose an arbitrary $i_0\in I_r$
%			\For {each $i\in I_r\setminus \{i_0\}$}
%				\State $\theta_{r,i}' \gets \tilde\theta_{r,i} - \tilde\theta_{r,{i_0}}$
%			\EndFor
		\EndFor \label{alg:ell_infty/ell_2:hash_loop_end}
%		\For {each $i\in S'$} \label{alg:ell_infty/ell_2:median_loop_begin}
%			\State $\theta_i'\gets \median_{r} \theta_{r,i}'$, where the median is taken over $r$ such that $\theta_{r,i}'$ is computed
%		\EndFor \label{alg:ell_infty/ell_2:median_loop_end}
		\State $S'' \gets \bigcap_r S_r'$
		\State Choose an arbitrary $i_0\in S''$
		\For {each $i\in S''\setminus\{i_0\}$}
			\State $\theta_{i_0,i}' = \median_{r: i,i_0\in I_r} (\tilde{\theta}_{r,i}-\tilde{\theta}_{r,i_0})$
		\EndFor
		\For {each $p\in P$}\label{alg:trial_begins}
			\State $\theta_{i_0}'\gets p$
			\State $\theta_i' \gets \theta_{i_0}' + \theta_{i_0,i}'$ for all $i\in S''\setminus\{i_0\}$
			\If {$d(\theta_i',P)\leq \eta/2$ for all $i\in S''$}\label{alg:ell_infty/ell_2:rotational_if}
				\State \Return $\hat x$ supported on $S''$ with $\arg \hat x_i = \theta_i$, where $\theta_i$ is the rounded value of $\theta_i'$ to $P$
			\EndIf
		\EndFor\label{alg:trial_ends}
	\EndIf
\end{algorithmic}
\caption{Algorithm for the $\ell_\infty/\ell_2$ phaseless sparse recovery. Assume that the elements in $P$ are sorted.}\label{alg:ell_inf/ell_2}
\end{algorithm}

We present the recovery algorithm in Algorithm~\ref{alg:ell_inf/ell_2}, where we assume that the set $P$ of valid phases have been sorted. Next we analyse the algorithm in four steps as follows.

\paragraph{Step 1} By Theorem~\ref{thm:HH}, the set $S$ has size $\Oh(k)$ and, with probability $1-1/\poly(n)$, contains all $(1/k)$-heavy hitters. The \textsc{Count-Sketch} (Theorem~\ref{thm:CS}) guarantees that
\begin{equation}\label{eqn:count_sketch}
\left| |x_i| - |\hat{x}_i| \right|^2 \leq  \frac{\epsilon}{C k} \|x_{-Ck/\epsilon} \|_2^2
\end{equation}
for all $i\in S$ with probability at least $1-1/\poly(n)$.

\paragraph{Step 2} Fix a repetition $r\in [R]$. We shall see that $L_r$, calculated in Line~\ref{alg:ell_infty/ell_2:L}, `approximates' the desirable tail $\frac{1}{k}\|x_{-k}\|_2^2$. For notational simplicity, we omit $r$ in the subscript and write $L_{r,i}$ as $L$, $\eta_{r,i}$ as $\eta_i$, etc., since we have fixed the repetition index $r$.

First we upper bound $L$. Decompose $x$ into real and imaginary parts as $x = y + i z$ with $y, z\in \R^n$ and consider $L_1 = \sum_i \eta_i g_i y_i$ and $L_2 = \sum_i \eta_i g_i z_i$. Note that $L^2 = L_1^2 + L_2^2$.

Choosing $C_0 \geq 200$, we have 
\begin{equation}\label{eqn:zero_eta}
\Pr\left\{ \eta_i = 0\text{ for all }i \in H_k(y)\cup H_k(z)\right\}\geq 0.99
\end{equation}
Condition on this event below. Note that $L_1 \sim \N(0,|\sum_i \eta_i y_i|_2^2)$, and
\[
\Pr\left\{ L_1^2 \geq 2.282^2\left|\sum_i \eta_i y_i\right|_2^2 \right\} \leq 1 - 2F(2.282) \leq 0.0225,
\]
where $F(t)$ denotes the cumulative distribution function of the standard normal distribution.

On the other hand, $\E|\sum_i \eta_i y_i|^2 = \sum_i (\E\eta_i) y_i^2 \leq \|y_{-k}\|_2^2/(C_0k)$ thus
\[
\Pr\left\{ \left|\sum_i \eta_i y_i\right|^2 \geq \frac{20}{C_0k}\|x_{-k}\|_2^2 \right\} \leq 0.05,
\]
and hence
\[
\Pr\left\{ L_1^2 \geq  \frac{105}{C_0k}\|y_{-k}\|_2^2 \right\} \leq 0.0725.
\]
Similarly we have
\[
\Pr\left\{ L_2^2 \geq  \frac{105}{C_0k}\|z_{-k}\|_2^2 \right\} \leq 0.0725.
\]
Therefore, taking a union bound of both events above and noting that $\|y_{-k}\|_2^2 + \|z_{-k}\|_2^2 \leq \|x_{-k}\|_2^2$, we have that
\begin{equation}\label{eqn:L_ub}
\Pr\left\{ L^2 \geq  \frac{105}{C_0k}\|x_{-k}\|_2^2 \right\} \leq 0.145.
\end{equation}

We therefore obtained an upper bound of $L$. The next lemma lower bounds $L$.

\begin{lemma}\label{lem:L_lb}
% Conditioning on the event that $\eta_i = 0$ for all $i \in H_k(x)$, 
With probability at least $0.8$, it holds that
	$L^2 \geq \frac{1}{C_1 k } \|x_{-C_2 k}\|_2^2$, where $C_1,C_2$ are absolute constants.
\end{lemma}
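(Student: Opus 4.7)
The plan is to combine a conditional Gaussian anti-concentration bound for $L$ with a high-probability lower bound on $\|x_S\|_2^2$, where $S := \{i : \eta_i = 1\}$. Writing $x = y + iz$ and $L = L_1 + iL_2$, conditional on $\eta$ each $L_j$ is a centred real Gaussian with variance $\|y_S\|_2^2$ or $\|z_S\|_2^2$ respectively, and since $\max(\|y_S\|_2^2, \|z_S\|_2^2) \geq \|x_S\|_2^2/2$ always holds, applying the standard Gaussian density bound to whichever of $L_1^2, L_2^2$ has the larger conditional variance yields an absolute constant $c_1 > 0$ such that $\Pr(L^2 \geq c_1 \|x_S\|_2^2 \mid \eta) \geq 0.92$. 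It therefore suffices to show $\|x_S\|_2^2 \geq c_2 \|x_{-C_2 k}\|_2^2/k$ with probability at least $0.9$, for absolute constants $c_2$ and $C_2$.

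For the second step, I would use a dichotomy based on the flatness of the tail. Let $H = H_{C_2 k}(x)$, $T = [n] \setminus H$, and $M := \|x_T\|_\infty^2$. In the flat case $M \leq \|x_T\|_2^2/(40 C_0 k)$, the sum $Z := \|x_{S \cap T}\|_2^2 = \sum_{i \in T} \eta_i |x_i|^2$ has mean $\|x_T\|_2^2/(C_0 k)$ and variance at most $M \cdot \|x_T\|_2^2/(C_0 k)$, so Chebyshev gives $\Pr(Z \geq \|x_T\|_2^2/(2 C_0 k)) \geq 0.9$. In the peaked case $M > \|x_T\|_2^2/(40 C_0 k)$, every $i \in H$ satisfies $|x_i|^2 \geq M$; choosing $C_2 \geq 10 C_0$ gives $\Pr(\exists i \in H : \eta_i = 1) \geq 1 - e^{-C_2/C_0} \geq 0.99$, and in that event $\|x_S\|_2^2 \geq M \geq \|x_T\|_2^2/(40 C_0 k)$. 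Either way, $\|x_S\|_2^2 \geq \|x_{-C_2 k}\|_2^2/(40 C_0 k)$ with probability at least $0.9$, and combining with Step~1 gives $L^2 \geq (c_1/(40 C_0)) \|x_{-C_2 k}\|_2^2/k$ with probability at least $0.82$, so the lemma holds with $C_1 = 40 C_0/c_1$ and $C_2 = 10 C_0$.

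The main obstacle is the variance calculation in the flat case: the variance-to-mean-squared ratio of $Z$ equals $C_0 k M/\|x_T\|_2^2$, so any direct Chebyshev or Bernstein bound becomes vacuous once $M$ is of order $\|x_T\|_2^2/(C_0 k)$. The dichotomy sidesteps this at exactly the right threshold, and the peaked case is then carried by the reservoir of $C_2 k$ head coordinates, each of magnitude squared at least $M$, so that $\eta$ almost surely samples at least one of them and single-handedly contributes the required mass to $\|x_S\|_2^2$.
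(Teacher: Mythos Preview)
Your proof is correct and takes a genuinely different route from the paper's. Both arguments first reduce to lower-bounding $\|x_S\|_2^2$ (where $S=\{i:\eta_i=1\}$) and then apply Gaussian anti-concentration for the coin flips $g_i$; the divergence is in how the lower bound on $\|x_S\|_2^2$ is obtained. The paper sorts the tail $[n]\setminus H_k(y)$, partitions it into consecutive blocks of size $C_2'k$, and uses a ballot/random-walk fact to guarantee that every prefix of the indicator sequence $(\delta_j)$ (where $\delta_j$ records whether block $S_j$ is hit by $\eta$) has more ones than zeros; an injection from missed blocks to earlier hit blocks then yields $\sum_i \eta_i|y_i|^2\ge \frac{1}{2C_2'k}\|y_{-C_2'k}\|_2^2$. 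Your flat/peaked dichotomy replaces this machinery entirely: when $M=\|x_{[n]\setminus H_{C_2k}}\|_\infty^2$ is small relative to the tail energy, a one-line Chebyshev bound on $Z=\sum_{i\notin H_{C_2k}}\eta_i|x_i|^2$ suffices; when $M$ is large, the $C_2k$ head coordinates each carry mass at least $M$ and are hit by $\eta$ with overwhelming probability, so a single sampled head coordinate already supplies $\|x_S\|_2^2\ge M$.

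Your approach is strictly more elementary---it uses only Chebyshev and a Bernoulli coupon bound, avoiding the ballot lemma entirely---and the dichotomy threshold is chosen exactly where Chebyshev would otherwise break down, which is a clean idea. The paper's block argument, on the other hand, never needs to look at the head $H_{C_2k}(x)$ at all: it shows directly that the sampled tail mass dominates, which is slightly stronger than what you prove (you allow the lower bound to be supplied by a head coordinate) and may be why the authors chose it, since a similar mechanism is reused later for \textsc{ComputeApprox} and \textsc{CombineBuckets}. For the purpose of this lemma, though, either proof is fine.
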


\begin{proof}
Decompose $x$ into real and imaginary parts as $x = y + i z$ with $y, z\in \R^n$. Consider $L_1 = \sum_i \eta_i g_i y_i$ and $L_2 = \sum_i \eta_i g_i z_i$, which are both real. Note that $L^2 = L_1^2 + L_2^2$.

First consider $L_1$. We sort coordinates $[n] \setminus H_k(y)$ by decreasing order of magnitude. Then, we split the sorted coordinates into continuous blocks of size $C_2' k$ and let $S_j$ denote the $j$-th block. Let $\delta_j$ be the indicator variable of the event that there exists $i \in S_j$ such that $\eta_i = 1$, then $\delta_j$'s are i.i.d.\ Bernoulli variables with $\E\delta_j = 1 - (1-1/(C_0k))^{C_2' k} \geq 1 - \exp(-C_2'/C_0)$, which can be made arbitrary close to $1$ by choosing $C_2'$ big enough. It is a standard result in random walks (see, e.g., \cite[p67]{karlin:first}) that when $\E\delta_j$ is big enough, with probability at least $0.95$, every partial prefix of the 0/1 sequence $(\delta_1, \delta_2, \delta_3, \ldots)$ will have more $1$s than $0$s. Call this event $\mathcal{E}$. In fact, one can directly calculate that $\Pr(\mathcal{E}) = 1 - (1-p)^2/p$ when $p:=\E\delta_j\geq 1/2$, and thus one can take $C_2' = \lceil 1.61 C_0\rceil$ such that $\Pr(\mathcal{E}) \geq 0.95$.

Condition on $\mathcal{E}$. We can then define an injective function $\pi$ from $\{j: \delta_j=0\}$ to $\{j:\delta_j = 1\}$. Specifically, we define $\pi(j) = \ell$, where $\delta_j$ is the $k$-th $0$ in the sequence and $\ell$ is the $k$-th $1$ in the sequence. Clearly $\pi$ is injective, $\pi(j) < j$ and $\delta_{\pi(j)} = 1$. It follows that
\begin{align*}
\sum_{i \in [n] }\eta_i  |y_i|^2  \geq \sum_{ j } \delta_j \| S_{j+1} \|_{\infty}^2 &\geq \sum_{j: \delta_j = 1} \frac{1}{C_2' k}\|S_{j+1}\|_2^2 \\
&\geq  \frac{1}{2} \sum_{j:\delta_j=1} \frac{1}{C_2' k}\|S_{j+1}\|_2^2 + \frac{1}{2} \sum_{\substack{j:\delta_j=1\\ \pi^{-1}(j)\text{ exists}}}   \frac{1}{C_2' k}\|S_{\pi^{-1}(j)}\|_2^2 \\
&\geq  \frac{1}{2} \sum_{j:\delta_j=1} \frac{1}{C_2' k}\|S_{j+1}\|_2^2 + \frac{1}{2} \sum_{j:\delta_j=0}  \frac{1}{C_2' k}\|S_{j}\|_2^2 \\
&\geq \frac{1}{2 C_2' k} \|y_{-C_2' k}\|_2^2.
\end{align*}
This implies that $L_1 = \sum_i \eta_i g_i y_i\sim \N(0,\sum_i \eta_i|y_i|^2)$ with probability at least $0.95$ will stochastically dominate a gaussian variable $\N(0,\frac{1}{2C_2k} \|y_{-C_2k}\|_2^2)$. Combining with the fact that $\Pr_{g\sim \N(0,1)}\{|g|\leq \frac{1}{16}\}\leq 0.05$, we see that 
\[
\Pr\left\{ L_1^2 \geq  \frac{1}{16^2\cdot 2C_2 k} \|y_{-C_2' k}\|_2^2\right\} \geq 0.9.
\]
Similarly for the imaginary part $z$ and $L_2$,
\[
\Pr\left\{ L_2^2 \geq  \frac{1}{16^2 \cdot 2C_2 k} \|z_{-C_2' k}\|_2^2\right\} \geq 0.9.
\]
Condition on that both events above happen. For notational convenience, let $T_1 = H_{C_2' k}(y)$ and $T_2 = H_{C_2' k}(z)$, then
\begin{align*}
L^2 = L_1^2 + L_2^2 &\geq \frac{1}{16^2 \cdot 2 C_2' k} (\|y_{T_1^c}\|_2^2 + \|z_{T_2^c}\|_2^2)\\
&\geq \frac{1}{16^2 \cdot 2 C_2' k} \|y_{(T_1\cup T_2)^c}\|_2^2 + \|z_{(T_1\cup T_2)^c}\|_2^2)\\
&= \frac{1}{16^2\cdot 2 C_2' k} \|x_{(T_1\cup T_2)^c}\|_2^2\\
&\geq \frac{1}{16^2\cdot 2 C_2' k} \|x_{-2C_2' k}\|_2^2.
\end{align*}
Therefore, we can take $C_2 = 2C_2'$ above and $C_1 = 16^2 C_2$.
\end{proof}

Combining \eqref{eqn:zero_eta}, \eqref{eqn:L_ub} and Lemma~\ref{lem:L_lb} and taking $C_0= 210$, we conclude that with probability at least $1 - 0.365$,
\begin{equation}\label{eqn:L_both}
\frac{1}{C_1k}\|x_{-C_2k}\|_2^2 \leq L^2\leq \frac{1}{2k}\|x_{-k}\|_2^2.
\end{equation}

\paragraph{Step 3} We keep the repetition $r$. We now show that the trimmed set $S'$ is good in the sense that its elements are not too small and it contains all $(1/k)$-heavy hitters. This is formalized in the following lemma.
\begin{lemma}\label{lem:S'}
Suppose that event \eqref{eqn:count_sketch} happens. With probability at least $0.635$, it holds that
\begin{enumerate}[(i)]
	\item $|x_i|^2 \geq \frac{1}{2C_1k} \|x_{-C_2k}\|_2^2 $ for all $i\in S'$; and
	\item $S'$ contains all coordinates $i$ such that $|x_i|^2 \geq \frac{1}{k} \|x_{-k}\|_2^2$.
\end{enumerate}
\end{lemma}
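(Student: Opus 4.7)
The plan is to condition on three favourable events and derive both (i) and (ii) deterministically from them. Let $\mathcal{A}$ be the event that the set $S$ returned by Theorem~\ref{thm:HH} contains $H_k(x)$, let $\mathcal{B}$ be the event that the Count-Sketch estimates satisfy $\left| |x_i|-|\hat{x}_i|\right|^2\leq\tfrac{\epsilon}{Ck}\|x_{-K}\|_2^2$ for every $i\in S$, where $K=Ck/\epsilon$ is the Count-Sketch sparsity parameter, and let $\mathcal{C}$ be the event that both inequalities in \eqref{eqn:L_both} hold. Events $\mathcal{A}$ and $\mathcal{B}$ each fail with probability $1/\poly(n)$, while Step~2 already bounds the failure of $\mathcal{C}$ by $0.365$, so a union bound yields $\Pr[\mathcal{A}\cap\mathcal{B}\cap\mathcal{C}]\geq 0.63$ for $n$ sufficiently large.

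To establish (ii) I would take any $i$ with $|x_i|^2\geq\|x_{-k}\|_2^2/k$, note $i\in S$ by $\mathcal{A}$, and combine $\mathcal{B}$ with the upper bound $L^2\leq\|x_{-k}\|_2^2/(2k)$ from $\mathcal{C}$ to obtain
\[
|\hat{x}_i|\;\geq\;|x_i|-\sqrt{\tfrac{\epsilon}{Ck}}\,\|x_{-k}\|_2\;\geq\;\Bigl(\tfrac{1}{\sqrt{k}}-\sqrt{\tfrac{\epsilon}{Ck}}\Bigr)\|x_{-k}\|_2\;\geq\;\tfrac{1}{\sqrt{2k}}\|x_{-k}\|_2\;\geq\;L,
\]
for $C$ a sufficiently large absolute constant, whence $i\in S'$. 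For (i), take $i\in S'$, so $|\hat{x}_i|\geq L$ by construction; $\mathcal{B}$ together with the lower bound $L\geq\|x_{-C_2k}\|_2/\sqrt{C_1k}$ from $\mathcal{C}$ then yields $|x_i|\geq L-\sqrt{\epsilon/(Ck)}\,\|x_{-C_2k}\|_2\geq(1-\sqrt{\epsilon C_1/C})L$, and choosing $C$ to make the factor at least $1/\sqrt{2}$ gives $|x_i|^2\geq L^2/2\geq\|x_{-C_2k}\|_2^2/(2C_1k)$ as required.

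The only delicate point is calibrating the Count-Sketch constant $C$ to serve two purposes simultaneously: its parameter $K=Ck/\epsilon$ must satisfy $K\geq C_2k$ so that the Count-Sketch tail $\|x_{-K}\|_2$ is dominated by the tail $\|x_{-C_2k}\|_2$ appearing in the lower bound on $L$, and $C$ must be large enough that $\sqrt{\epsilon C_1/C}\leq 1-1/\sqrt{2}$ for the argument of (i). Both are achieved by taking $C$ a suitably large absolute constant multiple of $\epsilon\max(C_1,C_2)$, which only affects the hidden constant in the measurement count.
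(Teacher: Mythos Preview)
Your proof is correct and follows essentially the same approach as the paper: condition on the Count-Sketch guarantee and the two-sided bound \eqref{eqn:L_both} on $L$, then derive (i) and (ii) deterministically via the same two inequalities. You are in fact slightly more careful than the paper in explicitly including the event $\mathcal{A}$ that $H_k(x)\subseteq S$ and in noting that the Count-Sketch tail $\|x_{-K}\|_2$ must be comparable to $\|x_{-C_2k}\|_2$; the paper sets $C=\frac{\sqrt{2}}{\sqrt{2}-1}C_1$ directly and silently uses $\|x_{-C_2k}\|_2$ in place of the Count-Sketch tail, which amounts to the same calibration you spell out.
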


\begin{proof}
The events \eqref{eqn:zero_eta} and \eqref{eqn:L_both} happen simultaneously with probability at least $1-0.365$. Condition on both events. Let $C = \frac{\sqrt2}{\sqrt2-1}C_1$. By our choice of constants, $C\geq C_1\geq C_2$, then
\begin{enumerate}[(i)]
\item for $i \in S'$, it holds that $|x_i| \geq L - \frac{1}{\sqrt{Ck}} \|x_{-C_2k}\|_2 \geq \frac{1}{\sqrt{2C_1k}} \|x_{-C_2k} \|_2$; 
\item if $|x_i|^2 \geq \frac{1}{k} \|x_{-k}\|_2^2$, then $|\hat{x}_i| \geq \frac{1}{\sqrt{k}} \|x_{-k}\|_2 - \frac{1}{\sqrt{Ck}} \|x_{-k}\|_2 \geq L$.\qedhere
\end{enumerate}
\end{proof}

\paragraph{Step 4} We now show that each $\tilde{\theta}_{r,i}$ is good for $i\in I_r$. Let $\theta_{r,i}$ be the (oriented) phase difference between $x_i$ and $\sum_{j=1}^n \eta_{r,j}g_{r,j}x_j$.
\begin{lemma}\label{lem:good theta tilde}
Suppose that event in Lemma~\ref{lem:S'} happens for $S_r'$. Then it holds for each $i\in I_r$ that, with probability at least $0.95$, $|\tilde{\theta}_{r,i} - \theta_{r,i}| \leq c\epsilon$ for some absolute constant $c$.
\end{lemma}
\begin{proof}
We have from our construction the measurements
\[
\left|e^{i(2c\epsilon \ell_1 + \frac{\pi}{2}\ell_2)}\sum_{i=1}^n \eta_{r,i} g_{r,i} x_i + \sum_{i \in h_{r}^{-1}(j) } (1-\eta_{r,i}) \sigma_{r,i} x_i\right|, \quad \ell_1,\ell_2 = 0,1,\ j\in[B],\ r\in[R].
\]

We note that Line~\ref{alg:line:phase_test} in the algorithm is valid because we have access to 
\[
\left|\hat{x}_i\right|,\ \left|\sum_{i=1}^n \eta_{r,i} g_{r,i} x_{r,i}\right|,\ \left|\hat{x}_i + e^{i(2c\epsilon \ell_1 + \frac{\pi}{2}\ell_2)}\sum_{i=1}^n \eta_{r,i} \sigma_{r,i} x_i + \sum_{ i' \in h^{-1}_r(h_r(i))\setminus S'} (1-\eta_{r,i'}) g_{r,i'} x_{i'}\right|.
\]

Let
\[
Z = \sum_{ i' \in h^{-1}_r(h_r(i))\setminus S'} (1-\eta_{r,i'}) \sigma_{r,i'} x_{i'}.
\]
Observe that 
\[
\Pr\left\{\left|h^{-1}_{r}(h_r(i)) \cap H_{_{C_2 k}}(x) \right| = 1\right\}\geq 1 - \frac{C_2}{C_B}
\]
and that 
\[
\E |Z|^2 = \frac{\epsilon}{C_Bk } \| x_{_{-C_2k}} \|_2^2.
\]
By Markov's inequality, we have that $|Z|^2 \leq \frac{40\epsilon}{C_B k} \|x_{-k}\|_2^2$ with probability at least $0.025$. Choose $C_B$ such that $\frac{40}{C_B} \leq \frac{1}{2C_1}$ and $\frac{C_2}{C_B} < \frac{1}{40}$, the assumptions on noise magnitude in Lemma~\ref{lem:all_phase_difference} will be satisfied for $x_i$ with probability at least $0.95$. Then  Lemma~\ref{lem:all_phase_difference} yields an estimate $\tilde\theta_{r,i}$ which satisfies $|\tilde\theta_{r,i}-\theta_i|\leq c\epsilon$.
\end{proof}

\paragraph{Step 5} We are now ready to prove Theorem~\ref{thm:ell_infty_ell_2}.

\begin{proof}[Proof of Theorem~\ref{thm:ell_infty_ell_2}]
Let $i\in S''$. Observe that for each $r$, it is isolated from every other $i' \in S_r'$ with probability $\frac{1}{C_B}$. A simple Chernoff bound shows that with probability at least $1-1/\poly(n)$, it is contained in at least $0.9$ fraction of $\{I_r\}_{r\in [R]}$, provided that $C_B$ is big enough.

Suppose that event~\eqref{eqn:count_sketch} happens. It follows from Lemmata~\ref{lem:S'} and~\ref{lem:good theta tilde} that in each repetition $r\in [R]$, for each $i\in I_r\setminus\{i_0\}$ it holds with probability at least $1 - 0.465$  that 
\[
|(\tilde\theta_{r,i}-\tilde\theta_{r,i_0}) - (\theta_{r,i}-\theta_{r,i_0})|\leq 2c\epsilon.
\]
Next we take the median over valid repetitions. To do this, we first check if there are at least half of $\tilde\theta_{r,i}-\tilde\theta_{r,{i_0}}$, when interpreted as real numbers in $[-\pi,\pi)$, satisfy that $\tilde\theta_{r,i}-\tilde\theta_{r,{i_0}} \in [-\frac{3}{25}\pi,\frac{3}{25}\pi]$. If this happens, take $\theta'_{i_0,i}$ to be the median of those numbers in $[-\pi,\pi)$; otherwise, take $\theta'_{i_0,i}$ to be the median with the interpretation of all $\tilde\theta_{r,i}-\tilde\theta_{r,{i_0}}$ as real numbers in $[0,2\pi)$. Since both $i$ and ${i_0}$ appear in at least $0.8$ fraction of $\{I_r\}_{r\in [R]}$, we can, by choosing the constant $c_R$ large enough, guarantee that
\begin{equation}\label{eqn:tilde_theta_i}
	\left|\theta'_{i_0,i} - (\theta_{r,i}-\theta_{r,i_0})\right| \leq 2c\epsilon 
\end{equation}
with probability at least $1-1/\poly(n)$. This allows for a union bound over all $i\in S''\setminus \{i_0\}$. Therefore \eqref{eqn:tilde_theta_i} holds for all $i\in S''\setminus\{i_0\}$ simultaneously with probability $\geq 1-1/\poly(n)$.

Next, assume that it happens that \eqref{eqn:tilde_theta_i} holds for all $i\in S''\setminus\{i_0\}$. Consider the for-loop from Line~\ref{alg:trial_begins} to \ref{alg:trial_ends}. It is clear that when $\theta_{i_0}'$ is exactly the phase of $x_{i_0}$, it will hold that $\theta_i'$ is an accurate estimate to the phase of $x_i$ up to an additive error of $2c\epsilon < \eta/2$. The if-clause in Line \ref{alg:ell_infty/ell_2:rotational_if} will be true and the algorithm will terminate with an $\hat x$. Since the phases of the entries are at least $\eta$ apart, there will be no ambiguity in rounding and the phases in $\hat x$ are all correct, hence the error $\|x-\hat x\|_2$ only depends on the magnitude errors, which is exactly \eqref{eqn:count_sketch}, obtained from applying \textsc{Count-Sketch}. When $\theta_{i_0}'$ is not $x_{i_0}$, by the rotational $(k,\eta)$-distinctness, $\{\theta_i'\}$ will coincide with $P$ exactly or the if-clause will not be true. This shows the correctness. 

Removing the conditioning of~\eqref{eqn:count_sketch} increases the failure probability $1/\poly(n)$ and the overall failure probability is $1/\poly(n)$.

\paragraph*{Number of Measurements} The submatrix $\Phi_{\textsc{HH}}$ has $\Oh(k\log n)$ rows, the submatrix $\Phi_{\textsc{CS}}$ has $\Oh((k/\epsilon)\log n)$ rows, the submatrix $\Phi_{\textsc{L}}$ has $\Oh(\log n)$ rows, 
each $\Phi_r$ for $r\in [R]$ has $\Oh(k/\epsilon)$ rows. Hence the total number of rows is dominated by that of $\Phi_{\textsc{CS}}$ and the $R$ independent copies of $\Phi_r$'s, that is, $\Oh((k/\epsilon)\log n + R(k/\epsilon)) = \Oh((k/\epsilon)\log n) = \Oh((k/\eta)\log n)$.

\paragraph*{Runtime} Line 1 runs in time $\Oh(k\poly(\log n))$, Line 2 in time $\Oh(|S|\log n) = \Oh(k\log n)$, Line 3 in time $O(\log n)$. The runtime before the if-branch of Line 4 is thus $\Oh(k\poly(\log n))$.

The noiseless case in Line 5 runs in time $\Oh(k\log k)$. 

For the for-loop from Line \ref{alg:ell_infty/ell_2:hash_loop_begin} to \ref{alg:ell_infty/ell_2:hash_loop_end}, Line 8 runs in time $\Oh(k)$, Line 9 in $\Oh(k)$, Line 10 in $\Oh(k\log k)$, Lines 11--12 in time $O(k)$. Hence this for-loop takes time $\Oh(R k\log k) = \Oh(k\poly(\log n))$. Line 13 runs in time $O(Rk) = O(k\log n)$. Calculating each median takes time $O(R)$ and thus Lines 15--16 takes time $O(Rk) = O(k\log n)$. The for-loop from Line \ref{alg:trial_begins} to \ref{alg:trial_ends} takes time $\Oh(k/\eta)$, since $|P|=\Oh(1/\eta)$, the if-clause in Line \ref{alg:ell_infty/ell_2:rotational_if} can be verified in time $\Oh(k)$ if the elements in $P$ are sorted in advance. Hence the total runtime of Lines 7--22 is $\Oh(k/\eta + k\poly(\log n))$.

Therefore, the overall runtime is $\Oh(k/\eta + k\poly(\log n))$.
\end{proof}

\section{Future Work}

In this paper we obtained sublinear-time algorithms for different versions of the compressive phase retrieval problem, using purely combinatorial techniques. Our results include the first sublinear-time algorithm for the $\ell_{\infty}/\ell_2$ problem, a popular problem in the literature. We suggest future directions and open questions that may be of interest to the computer science community.

\begin{itemize}
\item A natural question is to generalize our results to a more general set of valid phases. The difficulty lies in the fact that we can only estimate the phase up to an additive error of $O(\eta)$ using the Law of Cosines, and this would incur a large error if the corresponding heavy hitter has a huge magnitude.

\item Is it possible to obtain a uniform guarantee for noiseless signals that runs in $\tilde{\Oh}(k)$ time and uses $\Oh(k)$ measurements? Even something that is substantially less than $\Oh(k^2)$ time would be interesting.

\item Another question is whether one can achieve almost optimal bounds using structured measurements, one example being local correlation measurements as in \cite{IVW16}. This would pave the way of tackling the more constrained problem of phase retrieval, where the sensing matrix is the Discrete Fourier Matrix, a problem of great importance in engineering. There are numerous directions and open problems in this direction; interested readers can refer to~\cite{jaganathan2015phase,SECCMS15}.  
\end{itemize}

%\appendices 
%\input{phase_graph}

\section*{Acknoweledgement}

We would like to thank the anonymous reviewers for their comments and suggestions that greatly helped improve and clarify the paper.

\bibliographystyle{plain}
\bibliography{bibio}

\end{document}